\newcommand {\mm}[1] {\ifmmode{#1}\else{\mbox{\(#1\)}}\fi}
\newcommand{\denselist}{\itemsep 0pt\parsep=1pt\partopsep 0pt}
\newcommand{\ignore}[1]{}
\newsavebox{\smallProofsym}                            
\long\def\@makecaption#1#2{%
  \vskip\abovecaptionskip
  \sbox\@tempboxa{\small #1: #2}%
  \ifdim \wd\@tempboxa >\hsize
    \small #1: #2\par
  \else
    \global \@minipagefalse
    \hb@xt@\hsize{\hfil\box\@tempboxa\hfil}%
  \fi
  \vskip\belowcaptionskip}
\newcommand{\Rspace}        {\mm{{\mathbb R}}}
\newcommand{\Zspace}        {\mm{{\mathbb Z}}}
\newcommand{\norm}[1]       {\mm{\|{#1}\|}}
\newcommand{\Edist}[2]      {\mm{\|{#1}-{#2}\|}}
\title{Relaxed Disk Packing\thanks{This work
  is partially supported by the {\sc Toposys} project FP7-ICT-318493-STREP,
  and by ESF under the ACAT Research Network Programme.}}
\author{Herbert Edelsbrunner\thanks{IST Austria (Institute of Science
          and Technology Austria), Kloster\-neu\-burg, Austria,
          \texttt{edels@ist.ac.at}.} \and
        Mabel Iglesias-Ham\thanks{IST Austria (Institute of Science
          and Technology Austria), Kloster\-neu\-burg, Austria,
          \texttt{mabel.iglesias-ham@ist.ac.at}.} \and
        Vitaliy Kurlin\thanks{Microsoft Research Cambridge
          and Mathematical Sciences, Durham University,
          United Kingdom, \texttt{vitaliy.kurlin@gmail.com}.}
}
\begin{document}
\maketitle

\begin{abstract}
  Motivated by biological questions, we study configurations of equal-sized
  disks in the Euclidean plane that neither pack nor cover.
  Measuring the quality by the probability that a random point lies
  in exactly one disk, we show that the regular hexagonal grid
  gives the maximum among lattice configurations.
\end{abstract}

\vspace{0.1in}
{\small
 \noindent{\bf Keywords.}
   Packing and covering, disks, lattices, Voronoi domains,
   Delaunay triangulations.}

\section{Introduction}
\label{sec1}

High-resolution microscopic observations of the DNA organization inside
the nucleus of a human cell support
the \emph{Spherical Mega-base-pairs Chromatin Domain model}
\cite{Crem00,Kreth01}.
It proposes that inside the chromosome territories in eukaryotic cells,
DNA is compartmentalized in sequences of highly interacting segments
of about the same length \cite{Dixon12}.
Each segment consists of roughly a million base pairs and
resembles a round ball.  
The balls are tightly arranged within a restricted space,
tighter than a packing since they are not rigid,
and less tight than a covering to allow for external access
to the DNA needed for gene expression.

Motivated by these biological findings, \cite{Ham14} considered
configurations in which the overlap between the balls is limited and the
quality is measured by the density, which we define as the expected
number of balls containing a random point. 
We introduce a new measure that favors configurations between packing
and covering without explicit constraints on the allowed overlap.
Specifically, we measure a configuration by
\emph{the probability a random point is contained in exactly one ball}.
Since empty space and overlap between disks are both discouraged,
the optimum lies necessarily between packing and covering.
The interested reader can find references for traditional packing
and covering in two and higher dimensions
in \cite{Conway99,FejesToth53}.
In this paper, we restrict attention to equal-sized disks in the plane
whose centers form a lattice,
leaving three and higher dimensions as well as non-lattice configurations
as open problems.
Our main result is the following non-surprising fact.
\begin{theorem}[Main]
  Among all lattice configurations in $\Rspace^2$,
  the regular hexagonal grid in which each disk overlaps the six
  neighboring circles in $30^\circ$ arcs maximizes
  the probability that a random point lies exactly in one disk.
\end{theorem}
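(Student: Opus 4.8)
The plan is to turn the probability into a ratio of areas and then into a finite-dimensional optimization over the shape of a single Delaunay triangle. First I would fix the common radius $r$ and exploit the translational symmetry of the lattice $L$. The set of points lying in exactly one disk is the disjoint union, over all centers $v\in L$, of the ``exposed'' region $D_v\setminus\bigcup_{w\ne v}D_w$; these regions are mutually congruent and there is exactly one per fundamental domain. Writing $E=|D_0\setminus\bigcup_{v\ne 0}D_v|$ for the area exposed by the disk at the origin and $A$ for the covolume, this yields the clean identity
\[
 p_1 \;=\; \frac{E}{A}.
\]
Any exposed point of $D_0$ is strictly closer to $0$ than to every other center, so $E$ lies inside the Voronoi cell of $0$; consequently only the Voronoi neighbours of $0$, i.e.\ the edges of the Delaunay triangulation, can eat into $D_0$.

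Next I would make this Delaunay/Voronoi structure explicit. For a planar lattice the Delaunay triangles are congruent and two of them tile a fundamental domain, so $A=2T$ where $T$ is the area of a Delaunay triangle with side lengths $a,b,c$ (the three shortest relevant vectors), and $16T^2=(a+b+c)(-a+b+c)(a-b+c)(a+b-c)$ by Heron. In the regime expected to contain the optimum, namely between packing and covering where only these three neighbour pairs overlap $D_0$ and the resulting six lenses are pairwise disjoint, the exposed area is $E=\pi r^2-2\bigl(L(a,r)+L(b,r)+L(c,r)\bigr)$ with $L(s,r)$ the area of intersection of two radius-$r$ disks whose centers are $s$ apart, so
\[
 p_1 \;=\; \frac{\pi r^2-2\bigl(L(a,r)+L(b,r)+L(c,r)\bigr)}{2\,T(a,b,c)} .
\]
I would justify that a maximizer really sits here: just past the packing threshold, compressing the lattice lowers $A$ linearly while creating only higher-order overlap, so $p_1$ strictly exceeds the packing density and the optimum is pushed beyond touching; while for large $r$ multiple coverage forces $E\to 0$. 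Hence the optimum is interior, and one checks it involves only nearest relevant vectors with disjoint lenses.

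The heart of the argument is then to show this ratio is maximized by the equilateral triangle $a=b=c$, which is precisely the hexagonal lattice. The permutation symmetry in $a,b,c$ makes the equilateral configuration a critical point in the shape directions, and criticality along the remaining common-scaling direction is exactly the condition $dE/dr=0$ at fixed lattice. Computing $dE/dr$ as (exposed outer arc length of $\partial D_0$) minus (inner neighbour arc length inside $D_0$) and equating them, one gets $\pi-6\alpha=6\alpha$ for the half-angle $\alpha$ cut by each of the six neighbours, hence $2\alpha=30^\circ$: this pins the radius and proves the stated geometric description, with the six $30^\circ$ overlap arcs leaving exactly half of $\partial D_0$ exposed.

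I expect the main obstacle to be upgrading this critical point to a \emph{global} maximum. One must verify that the Hessian at the equilateral point is negative definite, rule out competing critical points (including the degenerate right-angled, i.e.\ rectangular, lattices on the boundary of the acute-triangle domain), and control the boundary of the disjoint-lens regime so that no better configuration escapes there. I would attack the global step by a symmetrization/balancing argument, showing that equalizing the three side lengths weakly increases $p_1$ and thereby reducing the whole problem to the one-parameter radius optimization already solved above.
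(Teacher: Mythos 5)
Your framework coincides with the paper's own (its Sections 2--3): the probability equals the exposed area divided by the area of the Voronoi domain, the optimal radius for a fixed lattice is characterized by the balance condition that the exposed and covered portions of $\partial B(0,\varrho)$ each have total angle $\pi$, and at the hexagonal lattice this balance yields the six $30^\circ$ overlap arcs. These steps are the paper's Monotonicity, Equilibrium Radius, and Equilibrium Area Lemmas, and your lens-area version of them is essentially sound; the pairwise disjointness of the lenses at the equilibrium radius does require a check, since antipodal lenses overlap once $\varrho \geq \norm{a}$, which can happen below the covering radius in skinny lattices, but overlapping antipodal lenses would force the covered part of the circle to exceed $240^\circ$, contradicting the balance condition, so the check succeeds.

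The genuine gap is the global optimization over lattice shapes, which you yourself flag as ``the main obstacle'' but do not resolve; it is not a technical afterthought but the entire content of the theorem. Permutation symmetry of $(\norm{a},\norm{b},\norm{c})$ gives only that the hexagonal lattice is a critical point, and the landscape genuinely contains competing local maxima: the paper's case analysis finds one at $\norm{b}=\sqrt{2}\,\norm{a}$, $\gamma=\arccos\tfrac{1}{2\sqrt{2}}$ (probability $0.755\ldots$), where only one pair of neighbors overlaps, another at $\norm{a}=\norm{b}$, $\gamma=\arccos(\sqrt{2}-1)$ (probability $0.910\ldots$), where two pairs overlap, and several further local extrema along the boundary of the six-arc region. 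Your proposed symmetrization step (``equalizing the three side lengths weakly increases $p_1$'') is exactly what must be proved, and it is delicate precisely because the objective is only piecewise analytic: the number of active lens pairs changes from one to two to three as the shape varies, the function has a different closed form in each regime, and any balancing path crosses the non-smooth interfaces where the boundary maxima above live. The paper closes this gap not by symmetrization but by an explicit three-case analysis (the Two, Four, and Six Arcs Lemmas) with computer-assisted maximization in each regime, comparing the resulting maxima $0.755 < 0.910 < 0.928$. Until you supply a proof of the balancing claim, or an equivalent global argument that covers all three regimes and the rectangular (non-primitive) boundary lattices, the Main Theorem is not established by your outline.
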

For obvious reasons, we call this the \emph{$12$-hour clock configuration}.
We prove its optimality in four sections:
preparing the background in Section \ref{sec2},
proving an equilibrium condition in Section \ref{sec3},
developing the main argument in Section \ref{sec4},
and giving the technical details in Appendix \ref{appA}.

\section{Background}
\label{sec2}

In this section, we introduce notation for lattices, Voronoi domains,
and Delaunay triangulations.

\paragraph{Lattices.}

Depending on the context, we interpret an element of $\Rspace^2$
as a point or a vector in the plane.
Vectors $a, b \in \Rspace^2$ are \emph{linearly independent}
if $\alpha a + \beta b = 0$ implies $\alpha = \beta = 0$.
A \emph{lattice} is defined by two linearly independent vectors,
$a, b \in \Rspace^2$, and consists of all integer combinations of
these vectors:
\begin{align}
  L(a,b)  &=  \{ ia + jb \mid i, j \in \Zspace \} .
\end{align}
Its \emph{fundamental domain} is the parallelogram of points
$\alpha a + \beta b$ with real numbers $0 \leq \alpha, \beta \leq 1$.
Writing $\norm{a}$ for the length of the vector
and $\gamma$ for the angle between $a$ and $b$,
the area of the fundamental domain is
$\det L(a,b) = \norm{a} \norm{b} \sin(\gamma)$.
The same lattice is generated by different pairs of vectors,
and we will see shortly that at least one of these pairs
defines a non-obtuse triangle.
We will be more specific about this condition shortly,
as it is instrumental in our proof of the optimality of
the regular hexagonal grid.

\paragraph{Voronoi domain.}

Given a lattice $L$,
the \emph{Voronoi domain} of a point $p \in L$ is the set of points
for which $p$ is the closest:
\begin{align}
  V(p)  &=  \{ x \in \Rspace^2 \mid \Edist{x}{p} \leq \Edist{x}{q},
                                    \forall q \in L \} .
\end{align}
It is a convex polygon that contains $p$ in its interior.
Any two Voronoi domains have disjoint interiors but may intersect
in a shared edge or a shared vertex.
The lattice looks the same from every one of its points,
which implies that all Voronoi domains are translates of each other:
$V(p) = p + V(0)$.
Similarly, central reflection through the origin preserves the lattice,
which implies that $V(0)$ is centrally symmetric.

\begin{figure}[hbt]
  \centering \resizebox{!}{1.25in}{\input{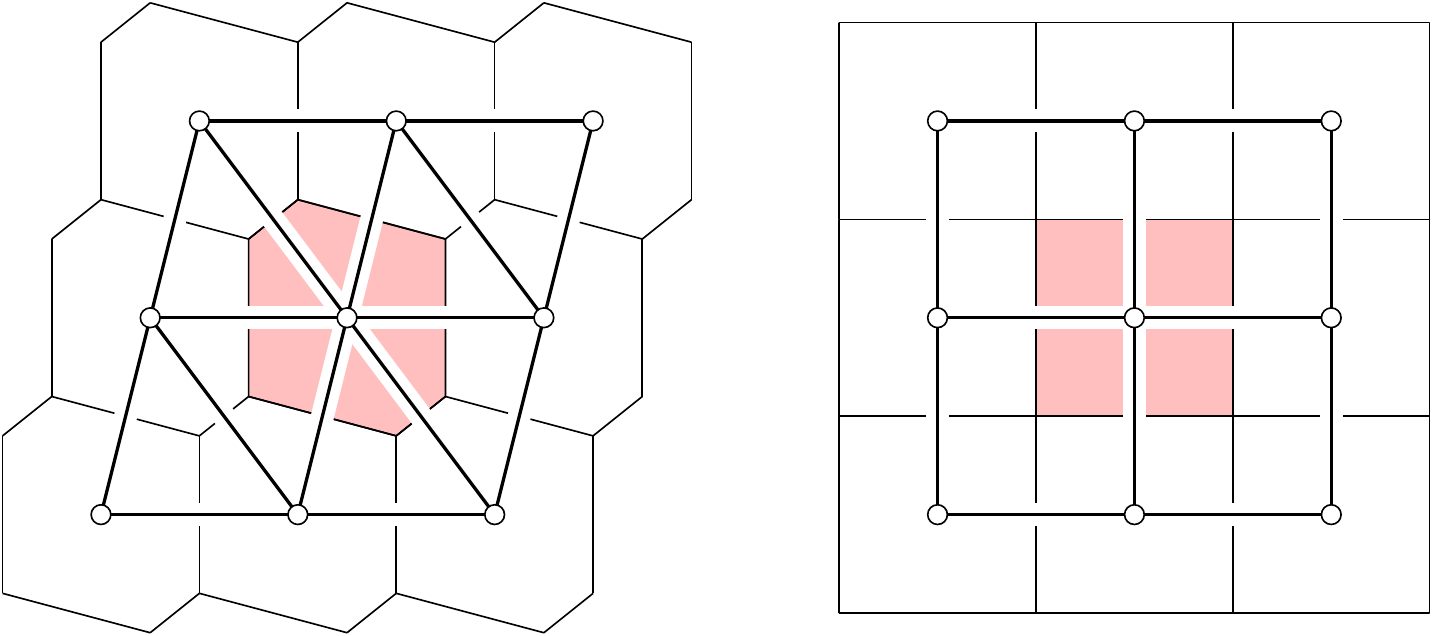_t}}
  \caption{A primitive Voronoi diagram on the \emph{left},
    and a non-primitive Voronoi diagram on the \emph{right}.
    The Delaunay triangulations are superimposed.}
 \label{fig:hexsq}
\end{figure}
The \emph{Voronoi diagram} of $L$ is the collection of Voronoi domains
of its points.
It is \emph{primitive} if the maximum number of Voronoi domains
with non-empty common intersection is $3$.
In this case, the Voronoi domain is a centrally symmetric hexagon;
see Figure \ref{fig:hexsq}.
In the non-primitive case, there are generators that enclose a right angle,
and the Voronoi domains are rectangles.
To the first order of approximation, the area of any sufficiently simple
and sufficiently large subset of $\Rspace^2$ is the number of lattice points
it contains times the area of $V(0)$.
Similarly, it is the number of lattice points
times the area of the fundamental domain.
It follows that the area of $V(0)$ is equal to $\det L$.

\paragraph{Packing and covering.}

For $\varrho > 0$, we write $B(p,\varrho)$ for the closed disk
with center $p$ and radius $\varrho$.
The \emph{packing radius} is the largest radius, $r_L$,
and the \emph{covering radius} is the smallest radius, $R_L$,
such that $B(0,r_L) \subseteq V(0) \subseteq B(0,R_L)$.
The \emph{density} of the configuration of disks with radius $\varrho$
centered at the points of $L$ is the area of a disk divided
by the area of the Voronoi domain:
\begin{align}
  \delta_L (\varrho)  &=  \tfrac{\varrho^2 \pi}{\det L} .
\end{align}
It is also the expected number of disks containing a random point
in $\Rspace^2$.
The \emph{packing density} is $\delta_L (r_L)$,
which is necessarily smaller than $1$.
It is maximized by the regular hexagonal grid, $H$, for which
we have $\delta_H (r_H) = 0.906\ldots$.
The \emph{covering density} is $\delta_L (R_L)$,
which is necessarily larger than $1$.
It is minimized by the regular hexagonal grid for which we have
$\delta_H (R_H) = 1.209\ldots$.
More generally, it is known that $H$ maximizes the density
among all configuration of congruent disks whose interiors
are pairwise disjoint \cite{Thue10},
and it minimizes the density among all configurations
that cover the entire plane \cite{Kershner39}.
Elegant proofs of both optimality results can be found
in Fejes T\'{o}th \cite{FejesToth53}.

\paragraph{Delaunay triangulations.}

Drawing a straight edge between points $p$ and $q$ in $L$
iff $V(p)$ and $V(q)$ intersect along a shared edge, we get
the \emph{Delaunay triangulation} of $L$.
In the primitive case, the edges decompose the plane into triangles.
Among the six triangles sharing $0$ as a vertex,
three are translates of each other and, going around $0$,
they alternate with their central reflections.
It follows that all six triangles are congruent and, in particular,
they have equally large circumcircles that all pass through $0$.
Since their centers are vertices of the Voronoi domain of $0$,
we have the following result.
\begin{lemma}[Inscribed Voronoi Domain]
  The vertices of $V(0)$ all lie on the circle bounding $B(0,R_L)$.
\end{lemma}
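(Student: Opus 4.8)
The plan is to establish that all vertices of the Voronoi domain $V(0)$ lie on the circle bounding the covering disk $B(0,R_L)$, which amounts to showing they are all equidistant from the origin at distance exactly $R_L$. I would work entirely in the primitive case, where $V(0)$ is a centrally symmetric hexagon, since the excerpt's setup and the preceding discussion of the six Delaunay triangles presupposes this.

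First I would recall the duality established in the text: each vertex of $V(0)$ is the circumcenter of a Delaunay triangle incident to $0$, and its distance to $0$ is the circumradius of that triangle. So the claim reduces to showing that all six Delaunay triangles sharing $0$ have the same circumradius, and that this common circumradius equals $R_L$. The congruence of the six triangles is exactly what the excerpt has just argued: three are mutual translates, they alternate around $0$ with their central reflections, and central reflection is an isometry preserving the lattice. Hence all six are congruent, so their circumradii coincide; call this common value $r$. Each of the six circumcenters therefore lies at distance $r$ from $0$, meaning every vertex of $V(0)$ lies on the circle of radius $r$ about $0$.

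It then remains to identify $r$ with the covering radius $R_L$. By definition $R_L$ is the smallest radius with $V(0)\subseteq B(0,R_L)$, so $R_L = \max_{x\in V(0)} \Edist{x}{0}$. Since $V(0)$ is a convex polygon, this maximum is attained at a vertex; and having shown every vertex sits at distance $r$, the maximum is exactly $r$. Thus $R_L = r$, and all vertices lie on the bounding circle of $B(0,R_L)$, as claimed.

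The main obstacle, and the only genuinely nontrivial point, is pinning down the congruence argument cleanly enough to conclude equal circumradii; most of the work has already been supplied in the paragraph preceding the lemma, so I expect the proof to be short and to lean heavily on that observation. A secondary subtlety is the handling of the non-primitive case, where $V(0)$ is a rectangle rather than a hexagon; there the four vertices of the rectangle are again equidistant from $0$ by symmetry, so the statement survives, but I would either note this explicitly or restrict the lemma to the primitive case to keep the argument uniform.
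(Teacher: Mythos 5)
Your proof is correct and follows essentially the same route as the paper: it invokes the congruence of the six Delaunay triangles around $0$ (three translates alternating with their central reflections) to conclude equal circumradii, and notes that the circumcenters are the vertices of $V(0)$, with the non-primitive rectangular case handled by symmetry. Your explicit identification of the common circumradius with $R_L$ via the maximum of $\Edist{x}{0}$ over the convex polygon is a small elaboration the paper leaves implicit, but it is the same argument.
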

The discussion above proves the Inscribed Voronoi Domain Lemma
in the primitive case.
It is also true in the simpler, non-primitive case in which
$V(0)$ is a rectangle.
Returning to the primitive case, we note that the two angles opposite
to a shared edge in the Delaunay triangulation add up to less than $180^\circ$.
In a lattice, these two angles are the same and therefore both acute.
The two types of triangles in the Delaunay triangulation of a lattice
can be joined across a shared edge in three different ways.
We can therefore make the same argument three times and conclude
that all angles are less than $90^\circ$.
A slightly weaker bound holds in the non-primitive case.
\begin{lemma}[Non-obtuse Generators]
  Every lattice $L$ in $\Rspace^2$ has vectors
  $a, b \in \Rspace^2$ with $L = L(a,b)$ such that
  \begin{enumerate}\denselist
    \item[(i)]  in the primitive case $0, a, b$ are the vertices
      of an acute triangle,
    \item[(ii)] in the non-primitive case $0, a, b$ are the vertices
      of a non-obtuse triangle with a right angle at $0$.
  \end{enumerate}
\end{lemma}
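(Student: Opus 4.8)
The plan is to produce the required generators as a \emph{minimal basis}: among all pairs $(a,b)$ with $L = L(a,b)$, choose one minimizing $\norm{a}^2 + \norm{b}^2$. Such a pair exists because only finitely many lattice vectors lie below any fixed length, so this nonnegative, discretely-valued quantity attains its infimum over bases. After possibly swapping $a$ and $b$ we may assume $\norm{a}\le\norm{b}$, and after replacing $b$ by $-b$ we may assume $\scalprod{a}{b}\ge 0$. The whole statement then reduces to reading off the three angles of the triangle $0,a,b$ from a single inner product: the angle at $0$ is acute, right, or obtuse according as $\scalprod{a}{b}$ is positive, zero, or negative; the angle at $a$ according as $\scalprod{a}{b}$ is less than, equal to, or greater than $\norm{a}^2$; and the angle at $b$ according as $\scalprod{a}{b}$ is less than, equal to, or greater than $\norm{b}^2$.

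The one inequality that does the work comes from minimality. Since $\{a, b-a\}$ is obtained from $\{a,b\}$ by a unimodular change of variables, it is again a basis of $L$, so minimality gives $\norm{b}^2 \le \norm{b-a}^2 = \norm{b}^2 - 2\scalprod{a}{b} + \norm{a}^2$, i.e. $\scalprod{a}{b} \le \tfrac12\norm{a}^2$. (The companion basis $\{a,b+a\}$ yields $\scalprod{a}{b}\ge -\tfrac12\norm{a}^2$, which is automatic once $\scalprod{a}{b}\ge 0$.) Combining with $\norm{a}\le\norm{b}$ gives $0 \le \scalprod{a}{b} \le \tfrac12\norm{a}^2 < \norm{a}^2 \le \norm{b}^2$. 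By the observation above, the angles at $a$ and at $b$ are therefore strictly acute, so the triangle $0,a,b$ is non-obtuse with its only possibly non-acute angle at $0$.

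It remains to split on the sign of $\scalprod{a}{b}$ and match it to the two cases. If $\scalprod{a}{b}=0$ then $a\perp b$, the angle at $0$ is a right angle, and $V(0)$ is the rectangle cut out by the bisectors of $\pm a$ and $\pm b$; here $a-b$ and $a+b$ have equal length and their bisectors meet $V(0)$ only at its corners, where four Voronoi domains coincide, so the diagram is non-primitive, giving case (ii). If $\scalprod{a}{b}>0$ then all three angles are strictly acute; here $a-b$ becomes Voronoi-relevant and contributes a sixth facet, so $V(0)$ is a hexagon and the diagram is primitive, giving case (i).

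The angle bookkeeping of the first two paragraphs is routine; the step I expect to require the most care is the last one, namely confirming that the algebraic dichotomy $\scalprod{a}{b}>0$ versus $\scalprod{a}{b}=0$ really coincides with the geometric primitive/non-primitive dichotomy. Concretely this means verifying that $a-b$ is Voronoi-relevant exactly when $\scalprod{a}{b}>0$, equivalently that $0,a,b$ is a Delaunay triangle whose circumcircle is the one furnished by the Inscribed Voronoi Domain Lemma, so that the six triangles around $0$ close up into a hexagonal cell. This is where one must exclude any competing shorter lattice vector, which is precisely what the minimality of the chosen basis supplies.
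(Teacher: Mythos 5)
Your proposal takes a genuinely different route from the paper, and most of it is correct. The paper reads the lemma off the Delaunay triangulation: the two angles opposite a shared Delaunay edge sum to less than $180^\circ$, central symmetry of the lattice makes them equal, hence each is acute; running this argument over the three edge classes shows every Delaunay triangle at the origin is acute in the primitive case, while the non-primitive case is handled by the background remark that such lattices have perpendicular generators and rectangular Voronoi domains. You instead run Lagrange--Gauss reduction: minimize $\norm{a}^2+\norm{b}^2$ over all bases, normalize so that $\norm{a}\le\norm{b}$ and $\scalprod{a}{b}\ge 0$, and play the basis $\{a,b-a\}$ against minimality to get $2\scalprod{a}{b}\le\norm{a}^2$. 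Your angle bookkeeping is right, and the chain $0\le\scalprod{a}{b}\le\tfrac{1}{2}\norm{a}^2<\norm{a}^2\le\norm{b}^2$ does prove that the angles at $a$ and at $b$ are strictly acute, so the reduced basis is either acute or right-angled at $0$. This is more elementary and self-contained than the paper's argument (no Delaunay facts are needed), and it also yields for free the ordering $\norm{a}\le\norm{b}\le\norm{a-b}$ that the paper has to arrange separately after the lemma.

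There is, however, a genuine gap, sitting exactly where you flagged it, and it matters for case (ii). What you actually prove is one implication: if $\scalprod{a}{b}=0$, the Voronoi domains are rectangles, four of them meet at each corner, and the lattice is non-primitive. The contrapositive settles case (i): a primitive lattice must have $\scalprod{a}{b}>0$, hence an acute reduced basis. But case (ii) needs the \emph{converse}: a non-primitive lattice must produce $\scalprod{a}{b}=0$, equivalently ``$\scalprod{a}{b}>0$ implies primitive.'' That is precisely your unproven assertion that $a-b$ is Voronoi-relevant and $V(0)$ is a hexagon; naming minimality as the responsible ingredient is not yet a proof. Two short patches are available. (1) Use the paper's background statement that a non-primitive lattice has perpendicular generators $a',b'$ with $\norm{a'}\le\norm{b'}$. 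In such a lattice every vector $ia'+jb'$ with $j\ne 0$ has length at least $\norm{b'}$, and a basis cannot have both vectors in $\Zspace a'$; hence any norm-minimal basis has norms exactly $\norm{a'},\norm{b'}$ and must consist of two perpendicular vectors from $\{\pm a',\pm b'\}$ (also in the square case), contradicting $\scalprod{a}{b}>0$. (2) Alternatively, prove primitivity directly: show the circumdisk of the triangle $0,a,b$ contains no further lattice point --- the critical candidates $a+b$ and $b-a$ lie strictly outside because the parallelograms $0,a,a{+}b,b$ and $0,a,b,b{-}a$ are cyclic only when they are rectangles, which $\scalprod{a}{b}>0$ and $\scalprod{a}{b}<\norm{a}^2$ rule out, and the reduction inequalities exclude the remaining candidates --- so that $0,a,b$ is a Delaunay triangle, exactly three domains meet at its circumcenter, and by translation and central symmetry every Voronoi vertex is of this type. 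Either patch is a few lines, but without one of them your case (ii) is unsupported.
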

Assuming $a, b$ satisfy the Non-obtuse Generators Lemma, the triangle $0ab$
has edges of length $\norm{a}$, $\norm{b}$, and $\norm{c} = \Edist{a}{b}$.
In the non-primitive case (ii),
$\norm{a}$ and $\norm{b}$ are the lengths of the sides of the
rectangle $V(0)$, and $\norm{c}$ is the length of a diagonal.
We have $\norm{c}^2 = \norm{a}^2 + \norm{b}^2$,
and therefore $\norm{a} \leq \norm{b} \leq \norm{c}$,
possibly after swapping $a$ and $b$.
In the primitive case (i), we can choose $a$, $b$, and $c = a-b$
such that $\norm{a} \leq \norm{b} \leq \norm{c}$.
In this case, $V(0)$ is a centrally symmetric hexagon
with distances $\norm{a}, \norm{b}, \norm{c}$ between antipodal
edge pairs.

\section{Equilibrium Configurations}
\label{sec3}

Given a lattice in $\Rspace^2$, we are interested in the radius of
the disks for which the probability that a random point lies inside
exactly one disk is maximized.
Further maximizing this probability over all lattices,
we get the main result of this paper.

\paragraph{Partial disks.}
Fix a lattice $L$ in $\Rspace^2$.
For a radius $\varrho > 0$, consider the set of points
that belong to the disk centered at the origin but not to any other
disk centered at a point of $L$:
\begin{align}
  D(0, \varrho)  &=  B(0, \varrho) \setminus
                     \bigcup_{0 \neq p \in L} B(p, \varrho) .
\end{align}
As illustrated in Figure \ref{fig:partialdisk}, for radii strictly
between the packing radius and the covering radius, this set is
partially closed and partially open.
We distinguish between the \emph{convex boundary} that
belongs to the circle bounding $B(0, \varrho)$,
and the \emph{concave boundary} that belongs to other circles:
\begin{align}
  \partial_x D(0, \varrho)  &=  \partial B(0, \varrho) \cap D(0, \varrho) , \\
  \partial_v D(0, \varrho)  &=  \partial D(0, \varrho) \setminus
                                 \partial_x D(0, \varrho) . 
\end{align}
We note that $\partial_x D(0, \varrho) = \partial B(0, \varrho) \cap V(0)$.
By the Inscribed Voronoi Domain Lemma, the vertices of $V(0)$ are all
at the same distance from $0$.
This implies that for $r_L < \varrho < R_L$,
the convex boundary consists of $2$, $4$, or $6$ circular arcs
that alternate with the same number of circular arcs in the
concave boundary.
\begin{figure}[hbt]
  \centering \resizebox{!}{2.25in}{\input{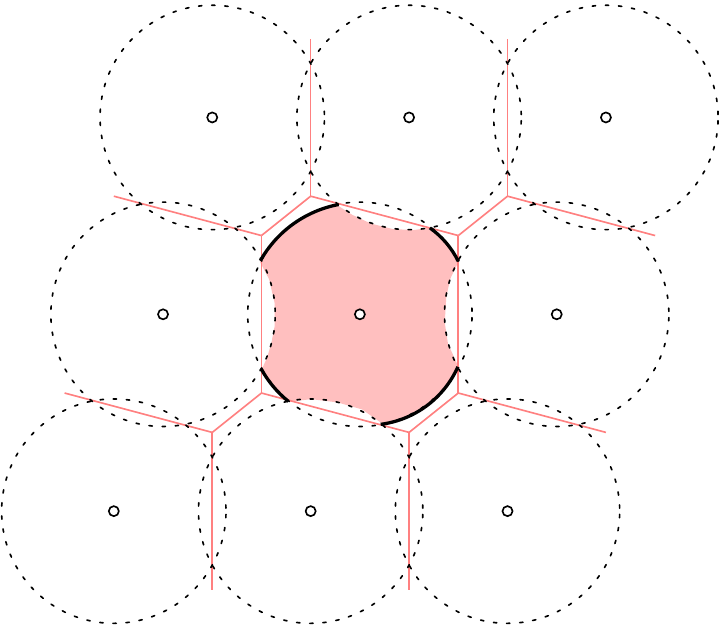_t}}
  \caption{The shaded partial disk $D(0, \varrho)$ with its boundary
    divided into a solid convex portion and a dotted concave portion.}
 \label{fig:partialdisk}
\end{figure}

\paragraph{Angles.}
Recall that the convex boundary consists of at most three pairs of arcs,
and let $\varphi_i (\varrho)$ be the angle
of each of the two arcs in the $i$-th pair, for $i = 1,2,3$.
The total angle of the convex boundary is
\begin{align}
  \Phi_L (\varrho)  &=  \sum_{i=1}^3 2 \varphi_i (\varrho),
\end{align}
and the total angle of the concave boundary is
$2 \pi - \Phi_L (\varrho)$.
We have $\Phi_L (r_L) = 2 \pi$ and $\Phi_L (R_L) = 0$,
and between these two limits, the function is continuous and
monotonically decreasing.
\begin{lemma}[Monotonicity]
  Let $L$ be a lattice in $\Rspace^2$.
  Then $\Phi_L \colon [r_L, R_L] \to [0, 2 \pi]$ is continuous,
  with $\Phi_L (\varrho_1) > \Phi_L (\varrho_2)$ whenever $\varrho_1 < \varrho_2$.
\end{lemma}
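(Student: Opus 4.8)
The plan is to reparametrize everything through the radial function of the convex polygon $V(0)$ and reduce the whole statement to soft properties of one function on the circle. Since $\partial_x D(0,\varrho) = \partial B(0,\varrho) \cap V(0)$, writing $u(\theta) = (\cos\theta,\sin\theta)$ for the unit direction at angle $\theta$, the total convex angle is $\Phi_L(\varrho) = \operatorname{meas}\{\theta \in [0,2\pi) : \varrho\, u(\theta) \in V(0)\}$. Because $V(0)$ is convex with $0$ in its interior, for each fixed direction $\theta$ the set of radii with $\varrho\, u(\theta) \in V(0)$ is an interval $[0,\rho^*(\theta)]$, where $\rho^*(\theta)$ is the distance from $0$ to $\partial V(0)$ along $u(\theta)$; that is, $\varrho\, u(\theta) \in V(0) \iff \varrho \le \rho^*(\theta)$. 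Hence
\[
  \Phi_L(\varrho) = \operatorname{meas}\{\theta \in [0,2\pi) : \rho^*(\theta) \ge \varrho\},
\]
so $\Phi_L$ is exactly the super-level measure of this \emph{radial function} $\rho^*$.

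For the ordering I would use convexity directly. If $\varrho_1 < \varrho_2$ and $\varrho_2\, u(\theta) \in V(0)$, then $\varrho_1\, u(\theta) = (\varrho_1/\varrho_2)\,\varrho_2\, u(\theta)$ lies on the segment from $0$ to $\varrho_2\, u(\theta)$, hence in $V(0)$; equivalently $\{\rho^* \ge \varrho_2\} \subseteq \{\rho^* \ge \varrho_1\}$, which already gives $\Phi_L(\varrho_1) \ge \Phi_L(\varrho_2)$. To upgrade to strict inequality, note that $\rho^*$ is continuous with minimum $r_L$ (attained at the foot of a nearest edge) and maximum $R_L$ (attained at the vertices, by the Inscribed Voronoi Domain Lemma), so its range is all of $[r_L,R_L]$. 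Choosing any $c \in (\varrho_1,\varrho_2) \subseteq (r_L,R_L)$, the intermediate value theorem together with continuity produces an open arc of directions on which $\varrho_1 < \rho^* < \varrho_2$; this arc lies in $\{\rho^* \ge \varrho_1\} \setminus \{\rho^* \ge \varrho_2\}$ and has positive measure, whence $\Phi_L(\varrho_1) > \Phi_L(\varrho_2)$.

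Continuity of a super-level measure is the delicate point, and I expect it to be the main obstacle: the map $\varrho \mapsto \operatorname{meas}\{\rho^* \ge \varrho\}$ is automatically monotone and right-continuous, and it can fail to be continuous only across a value $c$ for which $\{\theta : \rho^*(\theta) = c\}$ has positive measure, i.e.\ only if $\rho^*$ is constant on some arc. Ruling this out is where the polygonal structure of $V(0)$ enters. On the range of directions meeting a fixed edge --- a segment of the line at distance $d_i$ from $0$ with outer normal at angle $\theta_i$ --- one has $\rho^*(\theta) = d_i/\cos(\theta-\theta_i)$, which is strictly monotone on either side of its unique minimum $\theta_i$ and therefore constant on no subinterval. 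Since $\rho^*$ is a finite concatenation of such pieces (whether $V(0)$ is a hexagon or a rectangle, so no case split is needed), every level set is finite, hence of measure zero, and $\Phi_L$ is continuous on all of $[r_L,R_L]$. Combined with the boundary values $\Phi_L(r_L) = 2\pi$ and $\Phi_L(R_L) = 0$ recorded above, this establishes the lemma.
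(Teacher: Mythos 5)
Your proof is correct, and while its skeleton matches the paper's, your packaging via the radial function genuinely does more work. The paper's proof of weak monotonicity is exactly your nesting argument in different clothes: it compares the points $p_1(\theta)$, $p_2(\theta)$ on the two circles direction by direction and observes that $p_1(\theta) \notin V(0)$ but $p_2(\theta) \in V(0)$ is impossible, which is your statement $\{\rho^* \geq \varrho_2\} \subseteq \{\rho^* \geq \varrho_1\}$. Where you go beyond the paper is in the two places it is terse: for strictness, the paper merely says one ``just need[s] to observe'' that some arc of positive length in $\partial_x D(0,\varrho_1)$ leaves $V(0)$ at radius $\varrho_2$, whereas you actually produce such an arc from the intermediate value theorem applied to the continuous function $\rho^*$ with range $[r_L, R_L]$; for continuity, the paper asserts in one line that it ``follows from'' the finiteness of $\partial B(0,\varrho) \cap \partial V(0)$, whereas you supply the missing measure-theoretic step --- one-sided continuity of super-level measures is automatic, and a jump can only occur at a value whose level set $\{\theta : \rho^*(\theta) = c\}$ has positive angular measure, which the explicit formula $\rho^*(\theta) = d_i/\cos(\theta - \theta_i)$ on each edge rules out. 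One small slip: with the closed condition $\rho^* \geq \varrho$, the super-level measure is automatically \emph{left}-continuous (by continuity from above of the measure on the nested intersection), not right-continuous as you wrote; the possible jump is on the right side. This does not affect your argument, since your finiteness claim for level sets kills jumps from either side.
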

\begin{proof}
  The continuity of the function follows from the fact that
  $\partial B(0, \varrho)$ intersects $\partial V(0)$
  in at most a finite number of points.

  To prove monotonicity, we recall that $\Phi_L (\varrho)$
  is the total angle of
  $\partial_x D(0,\varrho) = \partial B(0, \varrho) \cap V(0)$.
  The Voronoi domain is a convex polygon with $0 \in V(0)$.
  Drawing circles with radii $\varrho_1 < \varrho_2$ centered at $0$,
  we let $0 \leq \theta < 2 \pi$ and write
  $p_1 (\theta)$ and $p_2 (\theta)$ for the points on the circles
  in direction $\theta$.
  Either both points belong to $V(0)$, both points do not belong to $V(0)$,
  or $p_1 (\theta) \in V(0)$ but $p_2 (\theta) \not\in V(0)$.
  The fourth combination is not possible,
  which implies $\Phi_L (\varrho_1) \geq \Phi_L (\varrho_2)$.
  To prove the strict inequality, we just need to observe
  that there is an arc of non-zero length in $\partial_x D(0, \varrho_1)$
  such that the corresponding arc in $\partial B(0, \varrho_2)$
  lies outside $V(0)$ and therefore does not belong to
  $\partial_x D(0, \varrho_2)$.
\end{proof}

\paragraph{Area.}
The probability that a random point belongs to exactly one disk
is the area of $D(0, \varrho)$ over the area of $V(0)$.
The latter is a constant independent of the radius.
We will prove shortly that the former is a unimodal function
in $\varrho$ with a single maximum at the radius $\varrho = \varrho_L$
that balances the lengths of the two kinds of boundaries;
see Figure \ref{fig:equilibrium}.
We call $\varrho_L$ the \emph{equilibrium radius} of $L$.
Write $A_L \colon [r_L, R_L] \to \Rspace$ for the function that
maps $\varrho$ to the area of $D(0, \varrho)$.
\begin{figure}[hbt]
  \centering \resizebox{!}{2.0in}{\input{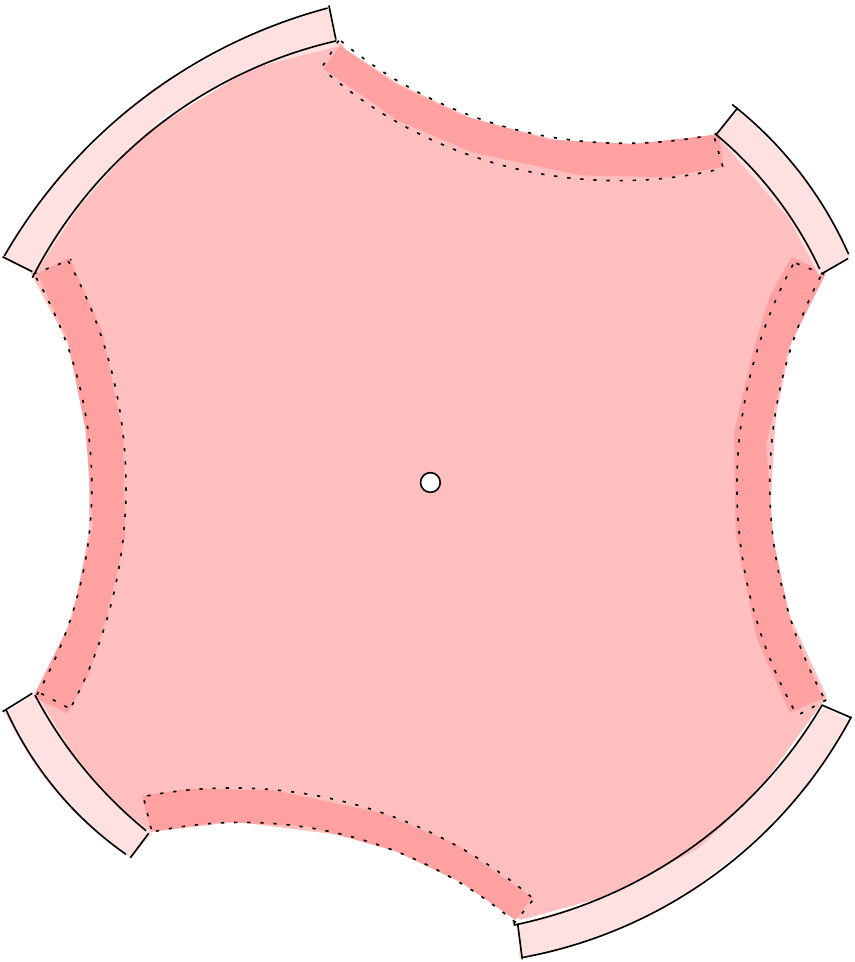_t}}
  \caption{Increasing the radius grows $D(0, \varrho)$ along
    the convex boundary and shrinks it along the concave boundary.}
 \label{fig:equilibrium}
\end{figure}
\begin{lemma}[Equilibrium Radius]
  Let $L$ be a lattice in $\Rspace^2$.
  The function $A_L \colon [r_L, R_L] \to \Rspace$ is strictly concave,
  with a unique maximum at the equilibrium radius $\varrho_L$ that satisfies
  $\Phi_L (\varrho_L) = \pi$.
\end{lemma}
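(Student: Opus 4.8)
The plan is to reduce the statement to the Monotonicity Lemma by computing the derivative of $A_L$ and identifying its sign with that of $\Phi_L(\varrho)-\pi$. First I would set up a first-order area-variation argument. Increasing the radius from $\varrho$ to $\varrho+\mathrm{d}\varrho$ enlarges $D(0,\varrho)$ along its convex boundary, where each boundary point moves outward at unit normal speed, and shrinks $D(0,\varrho)$ along its concave boundary, since every neighboring disk grows at the same rate and thus eats into the region at unit normal speed. Because all the arcs involved have radius $\varrho$, the convex and concave boundaries have lengths $\varrho\,\Phi_L(\varrho)$ and $\varrho\,(2\pi-\Phi_L(\varrho))$, so the transport theorem yields
\[
  A_L'(\varrho) \;=\; \varrho\,\Phi_L(\varrho) - \varrho\,(2\pi-\Phi_L(\varrho)) \;=\; 2\varrho\,(\Phi_L(\varrho)-\pi).
\]

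The unimodality and the location of the maximum then follow immediately. By the Monotonicity Lemma, $\Phi_L$ is continuous and strictly decreasing from $\Phi_L(r_L)=2\pi$ to $\Phi_L(R_L)=0$, so $\Phi_L(\varrho)-\pi$ has a single zero $\varrho_L$, is positive on $[r_L,\varrho_L)$ and negative on $(\varrho_L,R_L]$. Hence $A_L'>0$ before $\varrho_L$ and $A_L'<0$ after it, giving a unique maximum at $\varrho_L$ characterized by $\Phi_L(\varrho_L)=\pi$.

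To upgrade this to strict concavity I would obtain a closed form for $A_L$ and differentiate twice. Since $D(0,\varrho)\subseteq V(0)$, for $\varrho<R_L$ the lenses $B(0,\varrho)\cap B(p,\varrho)$ over the neighbors $p$ are pairwise disjoint: a common point would lie within $\varrho<R_L$ of three centers whose circumcenter is a Voronoi vertex at distance $R_L$, contradicting the Inscribed Voronoi Domain Lemma. Writing $d_e$ for half the distance to the neighbor across edge $e$ and summing the segment areas over the edges actually cut by the circle ($d_e<\varrho$), I get
\[
  A_L(\varrho) \;=\; \pi\varrho^2 - \sum_{e}\,2\!\left[\varrho^2\arccos\tfrac{d_e}{\varrho} - d_e\sqrt{\varrho^2-d_e^2}\,\right].
\]
Differentiating twice and using $\Phi_L(\varrho)=2\pi-2\sum_e\arccos(d_e/\varrho)$ turns $A_L''(\varrho)<0$ into the inequality $\pi < 2\sum_e g(d_e/\varrho)$ with $g(t)=\arccos t + t/\sqrt{1-t^2}$. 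As $g$ is increasing with $g(0)=\pi/2$, and central symmetry of $V(0)$ forces the number of cut edges to be even and hence at least two once $\varrho>r_L$, the right-hand side exceeds $2\pi$, so $A_L$ is strictly concave.

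The hard part will be the bookkeeping behind the closed form rather than the differentiation itself: one must check, uniformly in $\varrho\in[r_L,R_L)$, that the central disk overlaps only its nearest neighbors, that the corresponding lenses and excluded angular arcs are pairwise disjoint, and that the count of cut edges jumps from $2$ to $4$ to $6$ exactly as $\varrho$ crosses the successive edge distances, separating the primitive hexagon from the non-primitive rectangle. These are precisely the case distinctions I would relegate to the appendix; the first-order identity for $A_L'$ together with the Monotonicity Lemma is the clean core that already pins down the equilibrium radius.
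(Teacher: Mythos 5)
Your first two paragraphs are exactly the paper's proof: the transport identity $A_L'(\varrho)=\varrho\,\Phi_L(\varrho)-\varrho\,(2\pi-\Phi_L(\varrho))=2\varrho\,(\Phi_L(\varrho)-\pi)$ combined with the Monotonicity Lemma is the entire argument given in Section \ref{sec3}, and the paper stops there --- it never establishes ``strictly concave'' beyond this sign analysis of $A_L'$ (which, as you correctly sense, does not by itself imply concavity, since $2\varrho(\Phi_L(\varrho)-\pi)$ could in principle increase while positive). So for the part of the lemma that the paper actually proves, your route and the paper's coincide, and your attempt to supply the missing concavity argument is a genuine addition.

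The gap is in that addition, concretely in the disjointness claim. Your circumcenter argument covers only pairs of neighbors $p,q$ whose Voronoi edges share a vertex: then $0,p,q$ span a non-obtuse Delaunay triangle, its circumcenter realizes the min-max distance $R_L$ to the three centers, and no point lies within $\varrho<R_L$ of all three. It says nothing about the antipodal pair $p,-p$ (the points $0,p,-p$ are collinear and have no circumcenter), and those two lenses overlap as soon as $\varrho>\norm{p}$; moreover, at the same threshold, non-neighbor disks such as $B(2p,\varrho)$ begin to meet $B(0,\varrho)$, which your closed form also ignores. This regime is non-empty inside $(r_L,R_L)$ whenever $R_L>2r_L$, e.g.\ for the lattice generated by $a=(1,0)$, $b=(0,10)$, where $r_L=\tfrac12$ and $R_L=\sqrt{101}/2$. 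For that lattice and $1\leq\varrho\leq 5$ one has $D(0,\varrho)=B(0,\varrho)\setminus\bigl(B(a,\varrho)\cup B(-a,\varrho)\bigr)$, and inclusion-exclusion gives the exact derivative
\begin{align*}
  A_L'(\varrho) \;=\; 2\pi\varrho \;-\; 8\varrho\arccos\tfrac{1}{2\varrho}
                 \;+\; 4\varrho\arccos\tfrac{1}{\varrho} ,
\end{align*}
which increases from $-2\pi/3\approx-2.09$ at $\varrho=1$ to approximately $-0.15$ at $\varrho=2$. Hence $A_L$ is genuinely \emph{not} concave on all of $[r_L,R_L]$ for such lattices, so no bookkeeping can rescue your step: the strict concavity assertion itself requires the implicit assumption that arcs and lenses cut by distinct neighbors stay disjoint (an assumption the paper also makes silently when it identifies the concave-boundary angle with $2\pi-\Phi_L(\varrho)$; that identification fails in the same example). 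What does survive in general, and what the rest of the paper actually uses, is the unique maximum at $\Phi_L(\varrho_L)=\pi$, since $A_L'$ remains negative past $\varrho_L$; if you flag the concavity claim as valid only when $R_L\leq 2r_L$ (true in Cases 1--3 and in particular near the optimum), your write-up becomes a correct and more honest version of the paper's lemma.
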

\begin{proof}
  Recall that $\Phi_L (\varrho)$ is the total angle of the convex boundary
  of $D(0, \varrho)$, and $2 \pi - \Phi_L (\varrho)$ is the total
  angle of the concave boundary.
  When we increase the radius, the partial disk grows along the convex
  boundary and shrinks along the concave boundary.
  Indeed, the derivative is the difference between the two lengths:
  \begin{align}
    \tfrac{\partial A_L}{\partial \varrho} (\varrho)
      &=  \varrho [\Phi_L (\varrho) - (2 \pi - \Phi_L (\varrho)] \\
      &=  2 \varrho [\Phi_L (\varrho) - \pi].
  \end{align}
  The derivative vanishes when $\Phi_L (\varrho) = \pi$, is positive when
  $\Phi_L (\varrho) > \pi$ and negative when $\Phi_L (\varrho) < \pi$,
  as claimed.
\end{proof}

\section{Optimality of the Regular Hexagonal Grid}
\label{sec4}

In this section, we present the proof of our main result.
After writing the probability that a random point lies in exactly one
disk as a function of the radius, we distinguish between three cases,
showing that the maximum is attained at the regular hexagonal grid.

\paragraph{Probability.}
Given a lattice $L$ in $\Rspace^2$, we write $r_L < \varrho_L < R_L$
for the packing, equilibrium, and covering radii.
Recall that the probability in question is
\begin{align}
  P_L (\varrho_L)  &=  \tfrac{A_L (\varrho_L)}{\norm{a} \norm{b} \sin \gamma} ,
\end{align}
in which $L = L(a,b)$ and $\gamma$ is the angle between $a$ and $b$.
Recall furthermore that the convex boundary consists of at most six arcs,
two each with angle $\varphi_1$, $\varphi_2$, $\varphi_3$,
in which we set the angle to zero if the arc degenerates to a point
or is empty.
\begin{lemma}[Equilibrium Area]
  Let $L = L(a,b)$ be a lattice in $\Rspace^2$
  with angle $\gamma$ between $a$ and $b$.
  Then
  \begin{align}
    P_L (\varrho_L)  &=  \tfrac{2 \varrho_L^2}{\norm{a} \norm{b} \sin \gamma}
                        \cdot \sum_{i=1}^3 \sin \varphi_i  .
    \label{eqn:probability}
  \end{align}
\end{lemma}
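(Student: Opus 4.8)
The plan is to reduce the statement to an area computation and then evaluate that area by inclusion--exclusion, postponing the equilibrium condition to the last step. Since $P_L(\varrho_L) = A_L(\varrho_L)/\det L$ and $\det L = \norm{a}\norm{b}\sin\gamma$, the identity \eqref{eqn:probability} is equivalent to $A_L(\varrho_L) = 2\varrho_L^2\sum_{i=1}^3\sin\varphi_i$, where $\varphi_i$ is the angular width of the overlap of $B(0,\varrho)$ with a disk centred at a neighbour of the $i$-th pair. First I would compute $A_L(\varrho)$ for an arbitrary radius. From $D(0,\varrho) = B(0,\varrho)\setminus\bigcup_{p\neq0}B(p,\varrho)$, the area removed from the full disk equals $\mathrm{area}\big(B(0,\varrho)\cap\bigcup_{p\neq0}B(p,\varrho)\big)$, and for a neighbour $p$ at distance $d$ the overlap $B(0,\varrho)\cap B(p,\varrho)$ is a lens bounded by two equal arcs, each subtending the angle $\varphi$ with $\cos\tfrac{\varphi}{2} = \tfrac{d}{2\varrho}$; an elementary integration gives its area as $\varrho^2(\varphi-\sin\varphi)$.

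The decisive step is to show that at $\varrho = \varrho_L$ these lenses are pairwise disjoint and that no disk outside the (at most six) Voronoi neighbours meets $B(0,\varrho_L)$, so that inclusion--exclusion collapses to
\begin{align}
  A_L(\varrho_L) &= \pi\varrho_L^2 - 2\varrho_L^2\sum_{i=1}^3(\varphi_i-\sin\varphi_i),
\end{align}
the factor $2$ accounting for the two antipodal neighbours in each pair and a vanishing $\varphi_i = 0$ absorbing the $2$- and $4$-arc cases. A point of $B(0,\varrho_L)\cap B(p,\varrho_L)\cap B(q,\varrho_L)$ would be within $\varrho_L$ of each of $0,p,q$. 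If $p,q$ are adjacent neighbours, then $0pq$ is a Delaunay triangle, acute by the Non-obtuse Generators Lemma, so its smallest enclosing circle is its circumcircle of radius $R_L$, and no such point exists because $\varrho_L < R_L$. The remaining triples I would dispatch with the equilibrium bound $\varrho_L \le \norm{a}/\sqrt2$: the nearest pair alone contributes $2\varphi_1$ to the concave boundary, whose total angle is $2\pi-\Phi_L(\varrho_L) = \pi$, so $2\varphi_1\le\pi$ and therefore $\cos\tfrac{\varphi_1}{2} = \norm{a}/(2\varrho_L)\ge 1/\sqrt2$. Since acuteness makes any two Delaunay angles exceed $90^\circ$ in sum, every non-adjacent pair subtends an obtuse angle at $0$, so $\Edist{p}{q}^2 > d_p^2 + d_q^2 \ge 2\norm{a}^2$ and its enclosing radius exceeds $\norm{a}/\sqrt2 \ge \varrho_L$; the colinear antipodal triples have enclosing radius at least $\norm{a} > \varrho_L$; and every non-neighbour has length at least $\sqrt2\,\norm{a}\ge 2\varrho_L$. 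Hence all these triple intersections are empty.

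With inclusion--exclusion in hand, it remains to invoke the Equilibrium Radius Lemma. The overlaps make up exactly the concave boundary, whose total angle is $2\sum_{i=1}^3\varphi_i = 2\pi-\Phi_L(\varrho_L) = \pi$. Substituting $2\varrho_L^2\sum_i\varphi_i = \pi\varrho_L^2$ into the displayed formula cancels the leading term $\pi\varrho_L^2$ and leaves $A_L(\varrho_L) = 2\varrho_L^2\sum_{i=1}^3\sin\varphi_i$, as required.

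The hard part will be the disjointness argument of the second paragraph: showing that at the equilibrium radius every triple overlap is empty and that only the Voronoi neighbours remove area. This is exactly where the two structural inputs must be combined --- the acuteness of the Delaunay triangles (for adjacent neighbours) and the smallness of $\varrho_L$ relative to the neighbour distances forced by equilibrium (for the rest). Once inclusion--exclusion is justified, the area bookkeeping and the final substitution are routine.
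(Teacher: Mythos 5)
Your proposal is correct, and it reaches the formula by a genuinely different organization than the paper. The paper never performs inclusion--exclusion over the neighbouring disks: it sets $A_{\rm in} = {\rm area}(B(0,\varrho_L)\cap V(0))$ and $A_{\rm out} = {\rm area}(B(0,\varrho_L)\setminus V(0))$, identifies ${\rm area}(D(0,\varrho_L)) = A_{\rm in}-A_{\rm out}$ by lattice-translation symmetry, computes $A_{\rm out}$ as a sum of up to six circular segments cut off by the Voronoi edge lines, and then cancels the $\pi\varrho_L^2$ term using $\sum_i\varphi_i = \tfrac{\pi}{2}$. Since each of your lenses is the union of two congruent such segments, your identity $A_L(\varrho_L)=\pi\varrho_L^2-2\varrho_L^2\sum_i(\varphi_i-\sin\varphi_i)$ is numerically the paper's $\varrho_L^2\pi-2A_{\rm out}$, so the bookkeeping coincides; what differs is where the geometric burden falls. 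Your route makes explicit, and actually proves, the disjointness facts that the paper's two steps silently require (both the symmetry identity and the segment decomposition would fail if some point lay within $\varrho_L$ of three lattice points): the acute-Delaunay-triangle argument for adjacent neighbours, the equilibrium bound $\varrho_L\leq\norm{a}/\sqrt{2}$ for the remaining pairs, and the bound $\norm{q}\geq\sqrt{2}\,\norm{a}$ for non-neighbours $q$ (which does follow from the non-obtuse generators, via $0\leq\cos\gamma\leq\norm{a}/(2\norm{b})$, though you should write that line out). This is arguably more careful than the paper itself. One step needs rewording, since as stated it is circular: you derive $2\varphi_1\leq\pi$ from ``the nearest pair contributes $2\varphi_1$ to the concave boundary,'' but a priori other disks could eat into those concave arcs, which is exactly the disjointness you are in the middle of establishing. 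The repair is immediate: the two arcs of $\partial B(0,\varrho_L)$ covered by $B(\pm a,\varrho_L)$ each have angle $\varphi_1<\pi$, are centred at antipodal directions and hence disjoint, and both avoid the convex boundary $\partial B(0,\varrho_L)\cap V(0)$, so $2\varphi_1\leq 2\pi-\Phi_L(\varrho_L)=\pi$ directly. With that rephrasing your argument is complete and independent of the claim it was feeding into.
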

\begin{proof}
  Recall that $\norm{a} \norm{b} \sin \gamma$ is the area of $V(0)$.
  Let $A_{\rm in}$ be the area of $B(0, \varrho_L) \cap V(0)$,
  let $A_{\rm out}$ be the area of $B(0, \varrho_L) \setminus V(0)$,
  and note that $A_{\rm in} - A_{\rm out}$ is the area of $D(0, \varrho_L)$.
  Since $A_{\rm in} + A_{\rm out} = \varrho_L^2 \pi$,
  we have $A_{\rm in} - A_{\rm out} = \varrho_L^2 \pi - 2 A_{\rm out}$.
  The portion of $B(0, \varrho_L)$ outside the Voronoi domain consists
  of up to three symmetric pairs of disk segments, with total area
  \begin{align}
    A_{\rm out}  &=  2 \sum_{i=1}^3 \tfrac{\varrho_L^2}{2}
                     (\varphi_i - \sin \varphi_i)                 \\
                 &=  \varrho_L^2 \left( \tfrac{\pi}{2}
                       - \sum_{i=1}^3 \sin \varphi_i \right) ,
    \label{eqn:areaout}
  \end{align}
  in which the second line is obtained using
  $\sum_{i=1}^3 \varphi_i = \tfrac{\pi}{2}$ from the Equilibrium Radius Lemma.
  The probability is $A_{\rm in} - A_{\rm out}$ divided by the area
  of the Voronoi domain:
  \begin{align}
    P_L (\varrho_L)  &=  \tfrac{\varrho_L^2 \pi - 2 A_{\rm out}}
                              {\norm{a} \norm{b} \sin \gamma}.
  \end{align}
  Together with \eqref{eqn:areaout} this implies the claimed relation.
\end{proof}

\paragraph{Case analysis.}
We focus on the primitive case in which the Voronoi domain is a hexagon,
considering the non-primitive case a limit situation in which two
of the edges shrink to zero length.
Let $a, b \in \Rspace^2$ be generators of the lattice satisfying
the condition in the Non-obtuse Generators Lemma,
set $c = a-b$, and assume $\norm{a} \leq \norm{b} \leq \norm{c}$.
Recall that these three lengths are the distances between
parallel edges of the hexagon.
Further notice that we have $r_L = \tfrac{\norm{a}}{2}$ for the packing radius
and $R_L > \tfrac{\norm{c}}{2}$ for the covering radius.
As before, we write $\varphi_i$ for the angles of the arcs
of $\partial_x D(0, \varrho)$,
and we index such that $\varphi_1 \geq \varphi_2 \geq \varphi_3$.
\begin{description}\denselist
  \item[{\sc Case 1:}]  $\tfrac{\norm{a}}{2} < \varrho_L \leq \tfrac{\norm{b}}{2}$.
    Then $\varphi_1 > 0$ and $\varphi_2 = \varphi_3 = 0$.
  \item[{\sc Case 2:}]  $\tfrac{\norm{b}}{2} < \varrho_L \leq \tfrac{\norm{c}}{2}$.
    Then $\varphi_1 \geq \varphi_2 > 0$ and $\varphi_3 = 0$.
  \item[{\sc Case 3:}]  $\tfrac{\norm{c}}{2} < \varrho_L < R_L$.
    Then $\varphi_1 \geq \varphi_2 \geq \varphi_3 > 0$.
\end{description}
For example the configuration depicted in Figure \ref{fig:partialdisk}
falls into Case 2.
Using the expression for the probability in the Equilibrium Area Lemma,
we determine the maximum for each of the three cases.
Here we state the results,
referring to Appendix \ref{appA} for the proofs.
By \emph{the probability} we mean of course the probability that
a random point belongs to exactly one disk.
\begin{lemma}[Two Arcs]
  In Case 1, the maximum probability is attained
  for $\norm{b} = \sqrt{2} \norm{a}$,  
  $\gamma = \arccos \tfrac{1}{2 \sqrt{2}}$,
  and $\varrho_L = \norm{a}/\sqrt{2}$,
  which gives $P_L (\varrho_L) = 0.755\ldots$.
\end{lemma}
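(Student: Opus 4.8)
The plan is to turn the probability into an explicit function of the two shape parameters $\norm{b}/\norm{a}$ and $\gamma$, and then minimise its denominator over the region carved out by the Case~1 hypotheses.

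First I would simplify the Equilibrium Area Lemma. In Case~1 we have $\varphi_2 = \varphi_3 = 0$, so the identity $\sum_{i=1}^3 \varphi_i = \tfrac{\pi}{2}$ coming from the Equilibrium Radius Lemma forces $\varphi_1 = \tfrac{\pi}{2}$ and hence $\sum_{i=1}^3 \sin\varphi_i = 1$. Thus $P_L(\varrho_L) = \tfrac{2\varrho_L^2}{\norm{a}\norm{b}\sin\gamma}$. Next I would pin down $\varrho_L$ geometrically: at equilibrium the convex boundary is a half-circle, so the two segments poking past the nearest edges (at distance $\norm{a}/2$ from $0$) together span the complementary half-circle; by central symmetry each has half-angle $\tfrac{\pi}{4}$, so $\cos\tfrac{\pi}{4} = \tfrac{\norm{a}/2}{\varrho_L}$ and therefore $\varrho_L = \norm{a}/\sqrt{2}$. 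Substituting $\varrho_L^2 = \norm{a}^2/2$ collapses the probability to the clean form $P_L(\varrho_L) = \tfrac{\norm{a}}{\norm{b}\sin\gamma}$.

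Then I would set up the feasible region. Normalising $\norm{a} = 1$, the requirement $\varrho_L \le \norm{b}/2$ that keeps the second pair of edges uncrossed reads $\norm{b} \ge \sqrt{2}$, while the ordering $\norm{b} \le \norm{c}$ with $c = a-b$ and $\norm{c}^2 = \norm{a}^2 + \norm{b}^2 - 2\norm{a}\norm{b}\cos\gamma$ becomes $\cos\gamma \le \tfrac{1}{2\norm{b}}$. I would then check that the acuteness demands of the Non-obtuse Generators Lemma (in particular $\cos\gamma < \tfrac{1}{\norm{b}}$ for the angle at $a$) are automatically implied once $\cos\gamma \le \tfrac{1}{2\norm{b}}$, so that these two inequalities describe the whole admissible region.

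The optimisation itself is short: maximising $P$ means minimising $\norm{b}\sin\gamma$. For fixed $\norm{b}$ the admissible $\gamma$ lie in $[\arccos\tfrac{1}{2\norm{b}}, \tfrac{\pi}{2})$, where $\sin\gamma$ increases, so the minimum is at $\cos\gamma = \tfrac{1}{2\norm{b}}$, giving $\norm{b}\sin\gamma = \sqrt{\norm{b}^2 - \tfrac14}$; this is increasing in $\norm{b}$, so it is smallest at $\norm{b} = \sqrt{2}$, where it equals $\tfrac{\sqrt{7}}{2}$ and $P = \tfrac{2}{\sqrt{7}} = 0.755\ldots$. At this point $\norm{b} = \sqrt{2}\,\norm{a}$, $\cos\gamma = \tfrac{1}{2\sqrt{2}}$ and $\varrho_L = \norm{a}/\sqrt{2}$, as claimed. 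The main obstacle I anticipate is the bookkeeping in the previous paragraph, namely reducing the ordering and acuteness conditions to the single inequality $\cos\gamma \le \tfrac{1}{2\norm{b}}$; and one must notice that the optimum sits at the corner where $\norm{b} = \sqrt{2}\,\norm{a}$ and $\norm{b} = \norm{c}$ bind simultaneously --- the common boundary with Cases~2 and~3 --- so the maximum over Case~1 is attained in the closure of its parameter region.
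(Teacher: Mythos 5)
Your proposal is correct and follows essentially the same route as the paper: reduce to $\varphi_1 = \tfrac{\pi}{2}$, deduce $\varrho_L = \norm{a}/\sqrt{2}$ from $\cos\tfrac{\varphi_1}{2} = \tfrac{\norm{a}}{2\varrho_L}$, collapse the probability to $\tfrac{\norm{a}}{\norm{b}\sin\gamma}$, and push both constraints ($\varrho_L \leq \norm{b}/2$ and $\norm{b} \leq \norm{c}$) to their binding values. The only differences are cosmetic (you normalize $\norm{a}=1$ rather than $\norm{b}=1$) and that you spell out the two-step minimization of $\norm{b}\sin\gamma$, which the paper compresses into the remark that the two parameters can be optimized simultaneously.
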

\begin{lemma}[Four Arcs]
  In Case 2, the maximum probability is attained for
  $\norm{b} = \norm{a}$, $\gamma = \arccos (\sqrt{2} - 1)$,
  and $\varrho_L = \norm{a}\sqrt{1 - 1/\sqrt{2}}$,
  which gives  $P_L (\varrho_L) = 0.910\ldots$.
\end{lemma}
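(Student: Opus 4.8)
The plan is to write $P_L(\varrho_L)$ in Case 2 as a function of the lattice shape, show that the maximum is forced onto the boundary between Case 2 and Case 3, and then reduce the problem to a single-variable monotonicity check.

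First I would record the geometry feeding into the Equilibrium Area Lemma. In Case 2 only the two nearest pairs of edges of $V(0)$ are met by the circle of radius $\varrho_L$, so $\varphi_3 = 0$ and the equilibrium condition $\sum_i \varphi_i = \tfrac{\pi}{2}$ becomes $\varphi_1 + \varphi_2 = \tfrac{\pi}{2}$. The half-angles of the two disk segments satisfy $\cos\tfrac{\varphi_1}{2} = \tfrac{\norm{a}}{2\varrho_L}$ and $\cos\tfrac{\varphi_2}{2} = \tfrac{\norm{b}}{2\varrho_L}$ (with $\varphi_1 \ge \varphi_2$ since $\norm{a}\le\norm{b}$), while $\gamma$, the angle between $a$ and $b$, is an independent shape parameter. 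Normalizing $\varrho_L = 1$, the configuration is thus described by $\varphi_1\in[\tfrac{\pi}{4},\tfrac{\pi}{2})$ (with $\varphi_2 = \tfrac{\pi}{2}-\varphi_1$) together with $\gamma$, and the Equilibrium Area Lemma reads $P_L(\varrho_L) = \tfrac{2(\sin\varphi_1 + \sin\varphi_2)}{\norm{a}\norm{b}\sin\gamma}$.

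Next I would eliminate $\gamma$. For fixed $\varphi_1,\varphi_2$ the numerator and $\norm{a}\norm{b}$ are constant, so $P_L$ is proportional to $1/\sin\gamma$. The Non-obtuse Generators Lemma forces $\gamma < \tfrac{\pi}{2}$, where $\sin$ is increasing, while Case 2 requires $\norm{c}^2 = \norm{a}^2 + \norm{b}^2 - 2\norm{a}\norm{b}\cos\gamma \ge (2\varrho_L)^2 = 4$, which bounds $\gamma$ from below. Hence $\sin\gamma$ is smallest, and $P_L$ largest, exactly when $\norm{c} = 2$, i.e. at the Case 2/Case 3 boundary $\varrho_L = \tfrac{\norm{c}}{2}$ where the third arc is about to appear. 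This is the decisive structural step.

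On this boundary I would substitute $\cos\gamma = \tfrac{\norm{a}^2 + \norm{b}^2 - 4}{2\norm{a}\norm{b}}$ and collapse everything with product-to-sum identities. Writing $u = \cos\tfrac{\varphi_1-\varphi_2}{2}\in(\tfrac{1}{\sqrt2},1]$, one gets $\norm{a}\norm{b} = 2u + \sqrt2$, $\norm{a}^2 + \norm{b}^2 = 4 + 2\sqrt2\,u$, and $\sin\varphi_1 + \sin\varphi_2 = \sqrt2\,u$, whence after simplification $P_L = \tfrac{2u}{\sqrt{u^2 + 2\sqrt2\,u + 1}}$. Since $P_L^2 = 4/(1 + 2\sqrt2/u + 1/u^2)$ is visibly increasing in $u$, the maximum occurs at $u = 1$, that is $\varphi_1 = \varphi_2$ and $\norm{b} = \norm{a}$. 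Back-substitution then yields $\cos\gamma = \sqrt2 - 1$, $\varrho_L = \norm{a}\sqrt{1 - 1/\sqrt2}$, and $P_L(\varrho_L) = 2/\sqrt{2 + 2\sqrt2} = 0.910\ldots$, as claimed. The main obstacle is not this final one-variable calculus, which is clean once assembled, but the two reductions that precede it: recognizing that holding the arc data fixed makes $P_L$ depend on $\gamma$ only through $1/\sin\gamma$, so the optimum is driven to the case boundary; and managing the trigonometric bookkeeping that turns the three shape parameters plus the implicit equilibrium radius into the single variable $u$. I would also verify that the optimal boundary lattice is a genuine Case 2 configuration, which it is: $\gamma$ is the largest angle of triangle $0ab$ and stays below $\tfrac{\pi}{2}$, so the Non-obtuse Generators hypotheses hold and the optimum is not an artifact of the parametrization.
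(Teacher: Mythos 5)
Your proof is correct, and its skeleton matches the paper's own proof: both start from the Equilibrium Area Lemma with $\varphi_1 + \varphi_2 = \tfrac{\pi}{2}$ and $\varphi_3 = 0$, both observe that the only dependence on $\gamma$ is through the factor $1/\sin\gamma$ so the maximum is pushed to the smallest admissible $\gamma$, i.e.\ to the Case~2/Case~3 boundary $2\varrho_L = \norm{c}$ (the paper encodes this as the constraint $\cos\gamma \leq \sqrt{2} - \tfrac{\norm{a}^2+\norm{b}^2}{2\norm{a}\norm{b}}$ and then takes equality), and both finish with a one-variable optimization. Where you genuinely differ is that last step. The paper normalizes $\norm{b}=1$, derives $\varrho_L^2 = \tfrac{1}{2}\left(\norm{a}^2+1-\sqrt{2}\norm{a}\right)$, and ends with a function of $\norm{a}$ whose monotonicity it verifies \emph{with Maple}. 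You instead normalize $\varrho_L = 1$ and change variables to $u = \cos\tfrac{\varphi_1-\varphi_2}{2}$, which collapses the boundary probability to $P_L = 2u/\sqrt{u^2+2\sqrt{2}u+1}$, whose monotonicity is visible by hand since $P_L^2 = 4/(1+2\sqrt{2}/u+1/u^2)$. Your identities check out: $\norm{a}\norm{b} = 2u+\sqrt{2}$, $\norm{a}^2+\norm{b}^2 = 4+2\sqrt{2}u$, $\sin\varphi_1+\sin\varphi_2 = \sqrt{2}u$, $\cos\gamma = \sqrt{2}u/(2u+\sqrt{2})$ on the boundary, and at $u=1$ this gives $\cos\gamma = \sqrt{2}-1$ and $P_L = 2/\sqrt{2+2\sqrt{2}} = \sqrt{2\sqrt{2}-2} = 0.910\ldots$, matching the paper. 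So your route buys a fully self-contained, computer-algebra-free proof of this lemma, at the cost of some product-to-sum bookkeeping. Two details you handle that the paper leaves implicit are also worth keeping: the observation that the segment angles $\varphi_1,\varphi_2$ (hence the equilibrium radius) do not depend on $\gamma$ within Case~2, which is exactly what legitimizes treating $\gamma$ as a free parameter to be minimized; and the check that the optimizer, lying on the closed boundary $\varrho_L = \norm{c}/2$, is still a genuine Case~2 configuration with $\gamma < \tfrac{\pi}{2}$.
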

\begin{lemma}[Six Arcs]
  In Case 3, the maximum probability is attained for
  $\norm{b} = \norm{a}$ = \norm{c},
  $\gamma = \tfrac{\pi}{3}$ and $\varrho_L = \norm{a}/(2\cos\tfrac{\pi}{12})$,
  which gives  $P_L (\varrho_L) = 0.928\ldots$.
\end{lemma}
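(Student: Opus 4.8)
The plan is to turn the maximization into a constrained optimization over the three overlap angles and to exploit its symmetry. In Case 3 the circle of radius $\varrho_L$ crosses all six edges of the hexagon $V(0)$, and the overlap arc across the edge bisecting a nearest-neighbor vector of length $\norm{a}$, $\norm{b}$, $\norm{c}$ has angle $\varphi_1$, $\varphi_2$, $\varphi_3$, respectively. Since such an edge lies at distance half the vector length from $0$, elementary trigonometry gives the relations
\begin{align}
  \norm{a} = 2 \varrho_L \cos \tfrac{\varphi_1}{2}, \quad
  \norm{b} = 2 \varrho_L \cos \tfrac{\varphi_2}{2}, \quad
  \norm{c} = 2 \varrho_L \cos \tfrac{\varphi_3}{2},
\end{align}
so that $\varphi_1 \geq \varphi_2 \geq \varphi_3 > 0$ is equivalent to $\norm{a} \leq \norm{b} \leq \norm{c}$, while the Equilibrium Radius Lemma supplies the single constraint $\varphi_1 + \varphi_2 + \varphi_3 = \tfrac{\pi}{2}$. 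First I would substitute these expressions into the formula of the Equilibrium Area Lemma. Writing $\det L = \norm{a} \norm{b} \sin \gamma$ and eliminating $\gamma$ through the law of cosines $\norm{c}^2 = \norm{a}^2 + \norm{b}^2 - 2 \norm{a} \norm{b} \cos \gamma$, the area of $V(0)$ becomes $2 \varrho_L^2$ times a radical that, after the substitution $x_i = \cos \tfrac{\varphi_i}{2}$, is a symmetric polynomial in $x_1, x_2, x_3$. The factor $2 \varrho_L^2$ then cancels against the numerator $2 \varrho_L^2 \sum_i \sin \varphi_i$, leaving a scale-invariant objective that depends only on $\varphi_1, \varphi_2, \varphi_3$.

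The resulting function is fully symmetric in $\varphi_1, \varphi_2, \varphi_3$, which strongly suggests that its maximum on the simplex $\{ \varphi_i \geq 0, \sum_i \varphi_i = \tfrac{\pi}{2} \}$ is attained at the centroid $\varphi_1 = \varphi_2 = \varphi_3 = \tfrac{\pi}{6}$. This point corresponds exactly to $\norm{a} = \norm{b} = \norm{c}$ and $\gamma = \tfrac{\pi}{3}$, that is, to the regular hexagonal grid, with $\varrho_L = \norm{a}/(2 \cos \tfrac{\pi}{12})$ since $\cos \tfrac{\varphi_1}{2} = \cos \tfrac{\pi}{12}$. To confirm it I would introduce a Lagrange multiplier for the constraint $\sum_i \varphi_i = \tfrac{\pi}{2}$ and show that the symmetric point is the unique interior critical point: the stationarity conditions $\partial_i P = \lambda$ are symmetric, so any critical point with two unequal angles would force a second solution of a one-variable equation that I expect to have only the symmetric root. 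A second-order check — Hessian negative definite on the tangent space of the constraint, or the Schur-concavity inequality $(\varphi_i - \varphi_j)(\partial_i P - \partial_j P) \leq 0$ — then certifies a strict maximum. Evaluating at the centroid gives $P_L(\varrho_L) = 4 \sqrt{3} - 6 = 0.928\ldots$.

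The main obstacle is global, rather than local, optimality: I must rule out that a larger value is approached on the boundary of the Case 3 region. The relevant boundary is $\varphi_3 \to 0$, where the circle becomes tangent to the longest pair of edges and the configuration degenerates into Case 2; by the Four Arcs Lemma the supremum along this face is $0.910\ldots < 0.928\ldots$, so no boundary maximum can compete. The genuinely delicate part is therefore the uniqueness of the interior critical point together with the sign of the second-order term, where the symmetric but transcendental form of the objective makes a direct Hessian computation unpleasant. I would either push through the Lagrange analysis in the variables $x_i = \cos \tfrac{\varphi_i}{2}$, which rationalizes the radical and turns stationarity into polynomial identities, or argue Schur-concavity to place the maximum at the centroid without computing the Hessian explicitly. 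These routine but lengthy verifications are exactly what I would relegate to Appendix \ref{appA}.
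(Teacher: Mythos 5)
Your reduction is correct, and it is a genuinely different route from the paper's: substituting \eqref{eqn:varphi1}--\eqref{eqn:varphi3} into the Equilibrium Area Lemma and writing $\norm{a}\,\norm{b}\sin\gamma$ via Heron's formula does turn the probability into the scale-free, fully symmetric function
\begin{align}
  F(\varphi_1,\varphi_2,\varphi_3)
    &= \frac{\sin\varphi_1+\sin\varphi_2+\sin\varphi_3}
            {\sqrt{(x_1+x_2+x_3)(-x_1+x_2+x_3)(x_1-x_2+x_3)(x_1+x_2-x_3)}},
  \qquad x_i = \cos\tfrac{\varphi_i}{2},
\end{align}
on the simplex $\varphi_i>0$, $\varphi_1+\varphi_2+\varphi_3=\tfrac{\pi}{2}$. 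Your boundary treatment is also sound: the face $\varphi_3=0$ is the limit $\varrho_L=\tfrac{\norm{c}}{2}$ of Case 2, so the Four Arcs Lemma caps it at $0.910\ldots$; and the centroid indeed evaluates to $4\sqrt{3}-6=0.928\ldots$, matching the lemma.

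The gap is that the decisive step --- showing the centroid is the global interior maximum of $F$ --- is announced but never executed, and it does not follow from symmetry. Symmetry only makes the centroid a critical point of the constrained problem; symmetric functions are routinely extremized at asymmetric points, so the Lagrange or Schur-concavity verification you call ``routine but lengthy'' is in fact the entire content of the lemma. Moreover, your specific expectation that the stationarity equations have ``only the symmetric root'' is not true of this $F$: already on the symmetry line $\varphi_1=\varphi_2$ (equivalently $\norm{a}=\norm{b}$) the critical-point equations have further solutions --- in the paper's coordinates these are the roots \eqref{eqn:secondroot} and \eqref{eqn:thirdroot}, at $\varrho_L=0.582\ldots$ and $\varrho_L=1.931\ldots$ --- which are excluded only because they violate the domain constraints ($\varphi_3\le 0$, respectively $\varrho_L\ge R_L$), not because the equations have a unique solution. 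A domain-sensitive critical-point analysis is therefore unavoidable, and that is exactly what the paper supplies: it expresses $P_L$ in the variables $(\norm{a},\varrho_L)$ with $\norm{b}=1$, uses Maple to find all critical points (obtaining $\norm{a}=1$ with the three radii \eqref{eqn:firstroot}--\eqref{eqn:thirdroot}), and then certifies globality by evaluating $P_L$ at the vertices and at the edge-restricted critical points of a quadrangle enclosing the Case 3 region. Your symmetric parametrization would be a cleaner stage for that same computation --- it makes scale invariance manifest and places the hexagonal lattice at the centroid --- but it does not remove the need for it; as it stands, the proposal is a (nice) reduction, not a proof.
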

Note that the lattice in the Six Arcs Lemma is the regular hexagonal grid.
Comparing the three maximum probabilities,
we see that the regular hexagonal grid gives the global optimum;
see Figures \ref{fig:prob}.
\begin{figure}[hbt]
  \centering
    {\includegraphics[width=3.7cm]{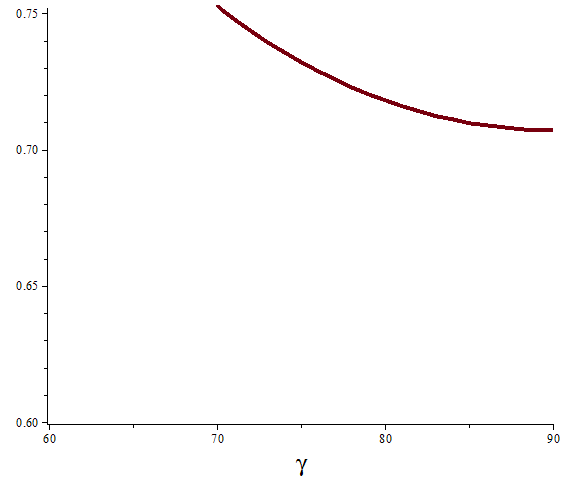}}
    \hspace{0.2in}
    {\includegraphics[width=3.7cm]{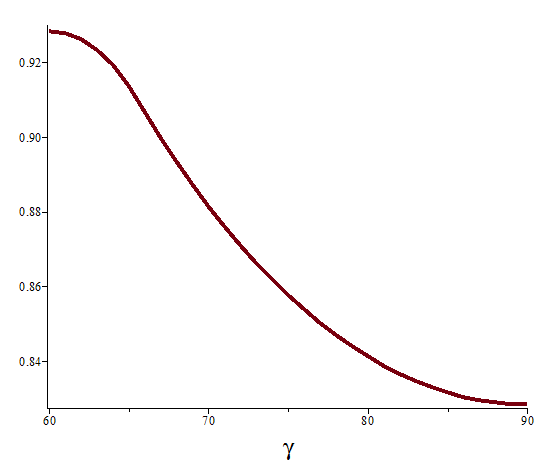}}
  \caption{\emph{Left:} for $\sqrt{2} \norm{a} = \norm{b} \leq \norm{c}$,
    the angle $\gamma$
    is between $\arccos \tfrac{1}{2 \sqrt{2}}$ and $\tfrac{\pi}{2}$.
    \emph{Right:} for $\norm{a} = \norm{b} \leq \norm{c}$,
    the angle $\gamma$ is between $\tfrac{\pi}{3}$ and $\tfrac{\pi}{2}$.
    In both cases, the probability increases as the angle decreases,
    attaining its maximum at the minimum angle.}
 \label{fig:prob}
\end{figure}
For this lattice, we get $\varrho_H$ such that each disk
overlaps with six others and in each case covers $30^\circ$
of the bounding circle: the $12$-hour clock configuration in the plane.
This implies the Main Theorem stated in Section \ref{sec1}.
We further illustrate the result by showing the graph of the function
that maps $\norm{a} / \norm{b}$ and the angle $\gamma$ to the probability
at the equilibrium radius; see Figure \ref{fig:capregions}.
\begin{figure}[hbt]
  \centering
  \includegraphics[width=12cm]{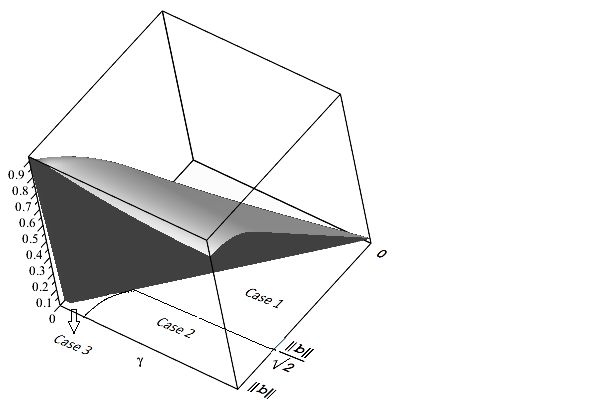}
  \caption{The graph of the function that maps
    $0 \leq \norm{a}/\norm{b} \leq 1$ and
    $\arccos \tfrac{\norm{a}}{2 \norm{b}} \leq \gamma \leq 90^\circ$
    to the probability that a random point lies in exactly one disk.
    The thus defined domain resembles a triangle and
    decomposes into three regions corresponding to Cases 1, 2, 3.
    The regular hexagonal grid is located at the lower left corner
    of the domain.}
  \label{fig:capregions}
\end{figure}

\section{Discussion}
\label{sec5}

The main result of this paper is a proof that the $12$-hour clock
configuration of disks in the plane maximizes the probability
that a random point lies in exactly one of the disks.
Other criteria favoring configurations between packing and covering
can be formulated, see \cite{Ham14},
and it would be interesting to decide which one fits the
biological data about DNA organization within the nucleus best.
There are also concrete mathematical questions
related to the work in this paper:
\begin{itemize}\denselist
  \item  Is the $12$-hour clock configuration optimal among
    all configurations of congruent disks in the plane?
  \item  What is the optimal lattice configuration of balls
    in $\Rspace^3$?
\end{itemize}
To appreciate the difficulty of the second question, we note that
the FCC lattice gives the densest packing \cite{Gauss63},
while the BCC lattice gives the sparsest covering \cite{Bambah54}.
Does one of them also maximize the probability that a random point
lies inside exactly one ball?

\section{Acknowledgements}
Vitaliy Kurlin is grateful to Herbert Edelsbrunner for hosting a short visit at IST Austria in March 2015, which was partially funded by the ACAT network.
The authors are grateful for the collaboration of Michael Kerber on proof checking the maple file that supports the computations on this paper.

\newpage

\appendix

\newpage
\section{Proofs}
\label{appA}

In this appendix, we give detailed proofs of the three Arc Lemmas.
As described in Section \ref{sec4}, the three lemmas add up to
a proof of the Main Theorem stated in Section \ref{sec1} of this paper.
We begin with a few relations that will be useful in all three proofs.
Given a triangle with edges of lengths $\norm{a}, \norm{b}, \norm{c}$
and angle $\gamma$ opposite the edge $c$, the \emph{law of cosines}
implies
\begin{align}
  \norm{c}^2  &=  \norm{a}^2 + \norm{b}^2 - 2 \norm{a} \norm{b} \cos \gamma .
  \label{eqn:lawofcosines}
\end{align}
Assuming $\norm{a} \leq \norm{b} \leq \norm{c}$, the angle $\gamma$
is at least as large as each of the other two angles.
From \eqref{eqn:lawofcosines} together with $\norm{b} \leq \norm{c}$,
we get $\cos \gamma \leq \tfrac{\norm{a}}{2 \norm{b}}$.
As justified by the Non-obtuse Generators Lemma, we may assume
the triangle is non-obtuse, which implies
\begin{align}
  \arccos \tfrac{\norm{a}}{2 \norm{b}}  &\leq  \gamma  \leq  \tfrac{\pi}{2} .
\end{align}
The range of possible angles is largest for $\norm{a} = \norm{b}$
where we get $60^\circ \leq \gamma \leq 90^\circ$.
Furthermore, we write the angles $\varphi_i$ of the arcs
in the convex boundary of the partial disk in terms of the edge lengths
and the radius:
\begin{align}
  \cos \tfrac{\varphi_1}{2}  &=  \tfrac{\norm{a}}{2 \varrho} ,
    \label{eqn:varphi1}                                        \\
  \cos \tfrac{\varphi_2}{2}  &=  \tfrac{\norm{b}}{2 \varrho} , 
    \label{eqn:varphi2}                                        \\
  \cos \tfrac{\varphi_3}{2}  &=  \tfrac{\norm{c}}{2 \varrho} .
    \label{eqn:varphi3}
\end{align}
The first relation holds provided
$\tfrac{\norm{a}}{2} \leq \varrho < R_L$,
and similar for the second and third relations.
Finally, we note that scaling does not affect the density of a configuration.
We can therefore set $\norm{b} = 1$, which we will do to simplify computations.

\paragraph{Proof of the Two Arcs Lemma.}
Case 1 is defined by $\tfrac{\norm{a}}{2} < \varrho_L \leq \tfrac{\norm{b}}{2}$,
which implies $\varphi_1 > 0$ and $\varphi_2 = \varphi_3 = 0$.
Since $\sin \varphi_2 = \sin \varphi_3 = 0$,
the probability at the equilibrium radius simplifies to
\begin{align}
  P_L (\varrho_L)  &=  \tfrac{2 \varrho_L^2}{\norm{a} \norm{b} \sin \gamma}
                       \cdot \sin \varphi_1 
    \label{eqn:TwoProb1} \\
                   &=  \tfrac{\norm{a}}{\norm{b} \sin \gamma} ,
    \label{eqn:TwoProb2}
\end{align}
where we get the second line by combining $\varphi_1 = \tfrac{\pi}{2}$
with \eqref{eqn:varphi1} to imply $\varrho_L = {\norm{a}} / {\sqrt{2}}$.
For the remainder of this proof, we normalize by setting $\norm{b} = 1$.
To maximize the probability, we choose $\norm{a}$ as large as possible
and $\gamma$ as small as possible.
From ${\norm{a}} / {\sqrt{2}} = \varrho_L \leq \tfrac{1}{2}$,
we get $\norm{a} \leq 1 / \sqrt{2}$,
and from $\norm{b} \leq \norm{c}$
we get $\gamma \geq \arccos \tfrac{1}{2 \sqrt{2}}$.
The two parameters can be optimized simultaneously, which gives
\begin{align}
  P_L (\varrho_L)  &=  \tfrac{1}{\sqrt{2}
                       \sin \left( \arccos \tfrac{1}{2 \sqrt{2}} \right)}
                    =  0.755\ldots .
\end{align}

\paragraph{Proof of the Four Arcs Lemma.}
Case 2 is defined by $\tfrac{\norm{b}}{2} < \varrho_L \leq \tfrac{\norm{c}}{2}$,
which implies $\varphi_1 \geq \varphi_2 > 0$ and $\varphi_3 = 0$.
The probability at the equilibrium radius is therefore
\begin{align}
  P_L (\varrho_L)  &=  \tfrac{2 \varrho_L^2}{\norm{a} \norm{b} \sin \gamma}
                       \cdot ( \sin \varphi_1 + \sin \varphi_2 ) .
  \label{eqn:FourProb}
\end{align}
To get a handle on the maximum of this function,
we first write the the sum of $\sin \varphi_1$ and $\sin \varphi_2$ 
and second the equilibrium radius in terms of other parameters.
Using $\cos 2 \alpha = \cos^2 \alpha - \sin^2 \alpha$ and 
 $\cos^2 \alpha + \sin^2 \alpha = 1$,
we get $\cos \varphi_2 = 2 \cos^2 \tfrac{\varphi_2}{2} - 1$, and
since $\varphi_1 + \varphi_2 = \frac{\pi}{2}$, we have $\sin \varphi_1 = \cos \varphi_2$.
Recalling \eqref{eqn:varphi2}, we get
$\sin \varphi_1 = {\norm{b}^2} / {(2 \varrho_L^2)} - 1$,
and recalling \eqref{eqn:varphi1}, we get
$\sin \varphi_2 = {\norm{a}^2} / {(2 \varrho_L^2)} - 1$.
Adding the two relations gives
\begin{align}
  \sin \varphi_1 + \sin \varphi_2 
    &=  \tfrac{\norm{a}^2 + \norm{b}^2}{2 \varrho_L^2} - 2 .
  \label{eqn:sinplussin}
\end{align}
To find a substitution for the equilibrium radius, we begin with
\eqref{eqn:varphi2},
use $\tfrac{\varphi_2}{2} = \tfrac{\pi}{4} - \tfrac{\varphi_1}{2}$,
and finally apply
$\cos (\alpha+\beta) = \cos \alpha \cos \beta - \sin \alpha \sin \beta$:
\begin{align}
  \tfrac{\norm{b}}{2 \varrho_L}
    &=  \cos \left( \tfrac{\pi}{4} - \tfrac{\varphi_1}{2} \right)  \\
    &=  \tfrac{1}{\sqrt{2}} \left( \cos \tfrac{\varphi_1}{2}
                                 + \sin \tfrac{\varphi_1}{2} \right) .
\end{align}
Next, we substitute the two trigonometric functions
using \eqref{eqn:varphi1} and $\sin^2 \alpha = 1 - \cos^2 \alpha$.
Simplifying the resulting relation and squaring it, we get
\begin{align}
  \varrho_L^2  &=  \tfrac{1}{2}
    \left( \norm{a}^2 + \norm{b}^2 - \sqrt{2} \norm{a} \norm{b} \right) .
  \label{eqn:fourarcsradius}
\end{align}
Plugging \eqref{eqn:sinplussin} and \eqref{eqn:fourarcsradius} into the
equation for the probability and normalizing by setting $\norm{b} = 1$,
we get
\begin{align}
  P_L (\varrho_L)  &=  \tfrac{2 \sqrt{2} \norm{a} - \norm{a}^2 - 1}
                             {\norm{a} \sin \gamma}                  \\
                   &=  \tfrac{2 \sqrt{2} \norm{a} - \norm{a}^2 - 1}
    {{\norm{a}}\sqrt{1 -
      \left( \sqrt{2}-\tfrac{\norm{a}^2+1}{2 \norm{a}} \right)^2 }}
    \label{eqn:fourarcsprob}
\end{align}
where we maximize to get the second line by
choosing $\gamma$ as small as possible.
Specifically, $\gamma$ is implicitly restricted
by $\tfrac{\norm{b}}{2} < \varrho_L \leq \tfrac{\norm{c}}{2}$, so we can
use $4 \varrho_L^2 \leq \norm{c}^2$ together with
\eqref{eqn:lawofcosines} and \eqref{eqn:fourarcsradius} to get
\begin{align}
  \cos \gamma  &\leq  \sqrt{2} - \tfrac{\norm{a}^2 + \norm{b}^2}
                                       {2 \norm{a} \norm{b}} .
  \label{eqn:fourarcsgamma}
\end{align}
Checking with the Maple software \cite{maple},
we find that the right-hand-side of \eqref{eqn:fourarcsprob}
increases in $[0,1]$ attaining its maximum at $\norm{a} = 1$.
We therefore get $\gamma = \arccos \left( \sqrt{2} - 1 \right)$
from \eqref{eqn:fourarcsgamma},
$\varrho_L^2 = 1 - 1/\sqrt{2}$ from \eqref{eqn:fourarcsradius}, and
\begin{align}
  P_L (\varrho_L)  &=  \sqrt{ 2 \sqrt{2} - 2 } 
                    =  0.910\ldots 
\end{align}
from \eqref{eqn:fourarcsprob}.

\paragraph{Proof of the Six Arcs Lemma.}
Case 3 is defined by $\tfrac{\norm{c}}{2} < \varrho_L < R_L$,
which implies $\varphi_1 \geq \varphi_2 \geq \varphi_3 > 0$.
Starting with the expression for the probability given in the
Equilibrium Area Lemma, we first express the $\sin \varphi_i$ in
terms of the other parameters:
\begin{align}
  \sin \varphi_1  &=  \tfrac{\norm{a} \sqrt{4 \varrho_L^2 - \norm{a}^2}}
                            {2 \varrho_L^2} , 
    \label{eqn:sixarcsvarphi1}                                    \\
  \sin \varphi_2  &=  \tfrac{\norm{b} \sqrt{4 \varrho_L^2 - \norm{b}^2}}
                            {2 \varrho_L^2} , 
    \label{eqn:sixarcsvarphi2}                                    \\
  \sin \varphi_3  &=  1 -
    \tfrac{\left[ \norm{a} \sqrt{4 \varrho_L^2 - \norm{b}^2}
                + \norm{b} \sqrt{4 \varrho_L^2 - \norm{a}^2} \right]^2}
          {8 \varrho_L^4} .
    \label{eqn:sixarcsvarphi3}
\end{align}
To get \eqref{eqn:sixarcsvarphi1},
we use $\sin 2 \alpha = 2 \sin \alpha \cos \alpha$
with $\alpha = \tfrac{\varphi_1}{2}$, together with \eqref{eqn:varphi1}.
To get \eqref{eqn:sixarcsvarphi2},
we use the same trigonometric identity
with $\alpha = \tfrac{\varphi_2}{2}$, together with \eqref{eqn:varphi2}.
To get \eqref{eqn:sixarcsvarphi3},
we use the equilibrium condition together with
$\sin \varphi_3 = \sin (\tfrac{\pi}{2} - \varphi_1 - \varphi_2) = \cos (\varphi_1 + \varphi_2)
  = 1 - 2 \sin^2 \tfrac{\varphi_1 + \varphi_2}{2}$,
and finally substitute
$\sin (\alpha + \beta) = \sin \alpha \cos \beta + \sin \beta \cos \alpha$,
with $\alpha = \tfrac{\varphi_1}{2}$ and $\beta = \tfrac{\varphi_2}{2}$.
  
To do the same for $\sin \gamma$, we take the cosine of both sides
of the equilibrium condition, which is
$\tfrac{\varphi_3}{2} = \tfrac{\pi}{4}
                      - \tfrac{\varphi_1}{2} - \tfrac{\varphi_2}{2}$.
Writing $c_i = \cos \tfrac{\varphi_i}{2}$ and $s_i = \sin \tfrac{\varphi_i}{2}$,
for $i = 1,2,3$, and applying standard trigonometric identities, we get
\begin{align}
  c_3    &=  \tfrac{1}{\sqrt{2}} [ c_1 c_2 - s_1 s_2 + s_1 c_2 + c_1 s_2 ] . \\
  c_3^2  &=  \tfrac{1}{2} + (2 c_1 c_2^2 - c_1) \sqrt{1-c_1^2} \nonumber \\
           &~~~~~~~~  + (2 c_1^2 c_2 - c_2) \sqrt{1-c_2^2} .
\end{align}
Using \eqref{eqn:varphi1}, \eqref{eqn:varphi2}, \eqref{eqn:varphi3} and 
substituting $\norm{c}^2$ using \eqref{eqn:lawofcosines},
we get the following relation after a few rearrangements:
\begin{align}
  \cos \gamma
   &=  \tfrac{\norm{a}^2 + \norm{b}^2 - 2}{2 \norm{a} \norm{b}} 
     - \tfrac{\norm{b}^2-2\varrho_L^2}{4 \varrho_L^2 \norm{b}} \sqrt{4\varrho_L^2-\norm{a}^2}
       \nonumber \\
   &   ~~~~~~~~~~~~~~~~~~~~~~~~~
     - \tfrac{\norm{a}^2-2\varrho_L^2}{4 \varrho_L^2 \norm{a}} \sqrt{4\varrho_L^2-\norm{b}^2} .
  \label{eqn:squared}
\end{align}
Using $\cos^2 \gamma = 1 - \sin^2 \gamma$, we can substitute $\sin \gamma$
in the formula for $P_L (\varrho_L)$.
We thus arrived at a relation that gives the probability in terms of
$\norm{a}$, $\norm{b}$, and $\varrho_L$ only.
While being lengthy, this relation is readily obtained by plugging
\eqref{eqn:sixarcsvarphi1}, \eqref{eqn:sixarcsvarphi2}, \eqref{eqn:sixarcsvarphi3},
and \eqref{eqn:squared} into \eqref{eqn:probability}.
We therefore take the liberty to omit the formula here and refer
the interested reader to the website of the second author of this paper\footnote{A Maple
  file with the main steps in the formulas related with this paper is available at
  \url{http://mabelih9.wix.com/mabelhome#!publications/cee5}.}.

It remains to determine the parameters that maximize the probability.
To simplify this task, we normalize by setting $\norm{b} = 1$.
The probability is thus a function of two variables, $\norm{a}$ and $\varrho_L$.
Using the Maple software, we compute the two partial derivatives,
$\partial P_L / \partial \norm{a}$ and $\partial P_L / \partial \varrho_L$.
Setting both to zero, we get $\norm{a} = 1$ matched up with
three radius values:
\begin{align}
  \varrho_L  &=  \tfrac{1}{2} \left( \sqrt{6}-\sqrt{2} \right) = 0.517\ldots, 
    \label{eqn:firstroot}          \\
  \varrho_L  &=  \tfrac{1}{2} \sqrt{C^\frac{1}{3}-1+C^{-\frac{1}{3}}} =  0.582\ldots,
    \label{eqn:secondroot}         \\
  \varrho_L  &=  \tfrac{1}{2} \left( \sqrt{6}+\sqrt{2} \right) = 1.931\ldots,
    \label{eqn:thirdroot}
\end{align}
with $C = 3 + 2 \sqrt{2}$.
Setting $\norm{a} = \norm{b} = 1$, the covering radius depends only on the angle $\gamma$,
which ranges from $\tfrac{\pi}{3}$ to $\tfrac{\pi}{2}$.
It is largest for $\gamma = \tfrac{\pi}{2}$, where $R_L = \sqrt{2}/2 = 0.707\ldots$.
The radius we get at the third root is larger than that and can therefore be excluded.

\begin{figure}[hbt]
  \centering
  \vspace{-0.1in} \includegraphics[width=9cm]{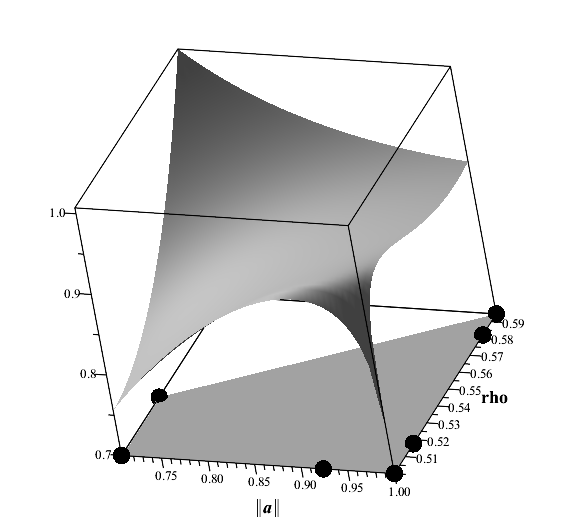} \vspace{-0.2in}
  \caption{The quadrangle in the plane defined by the length of $a$ and the radius
    is shaded.
    Along its boundary, we encounter three local minima (two corners and a point
    along the right edge) and three local maxima (a corner and a point each along
    the lower edge and the right edge).}
  \label{fig:superset}
\end{figure}
The maximum probability is attained at one of the two remaining roots or
along the boundary of the domain region that corresponds to Case 3.
As illustrated in Figure \ref{fig:superset}, we simplify the computation by taking
a quadrangle that contains this region.
The quadrangle is defined by
\begin{align}
  \tfrac{\sqrt{2}}{2}  &\leq  \norm{a}    \leq  1                         \\
  \tfrac{1}{2}         &\leq  \varrho~~~~ \leq  \tfrac{\norm{c}}{2\sin \gamma_{max}} ,    
\end{align}
in which the first interval follows from the bounds that define Case 3,
and the second interval is obtained by limiting the radius by the covering
radius of the configuration in which $a$ and $b$ enclose its maximum angle.
The maximum angle for Case 3, $\gamma_{max}$, corresponds to the 
minimal angle for Case 2 derived from~\eqref{eqn:fourarcsgamma}. 
The upper boundary on $\varrho$ is a convex curve and so we replace 
it with the straight line connecting its extremes.

We evaluate the probability at the four vertices and,
using the Maple software, at the roots of the derivatives along the four edges.
As shown in Figure \ref{fig:superset}, we find three local minima
alternating with three local maxima along the boundary of the quadrangle.
The maximum and minimum in the interior of the right edge coincide with the
roots in \eqref{eqn:firstroot} and \eqref{eqn:secondroot}.
Among the three local maxima, the probability is largest at \eqref{eqn:firstroot},
which is characterized by
$\norm{a} = \norm{b} = 1$ and
$\varrho_L = \tfrac{1}{2} \left( \sqrt{6} - \sqrt{2} \right) = 1 / (2 \cos \tfrac{\pi}{12})$.
This gives $P_L (\varrho_L) = 0.928\ldots$, as claimed in the Six Arcs Lemma.
Indeed, plugging the values into the equilibrium condition of Case 3,
we get $\norm{c} = 1$, which shows that the probability is maximized
by the regular hexagonal grid.

\end{document}